\newcommand{\bvec}[1]{\ensuremath{\mathbf{#1}}}
\newtheorem{thm}{Theorem}
\newtheorem{lemma}[thm]{Lemma}
\newtheorem{cor}[thm]{Corollary}
\title{Maximal f-vectors of Minkowski sums of large numbers of polytopes}
\date{}
\author{
 Christophe Weibel\footnote{Department of Mathematics and Statistics, McGill, Montr\'eal, Canada. \texttt{weibel@math.mcgill.ca}}
}
\newcommand{\qed}{\mbox{}\hspace*{\fill}\nolinebreak\mbox{$\rule{0.6em}{0.6em}
$}}
\newenvironment{proof}{\begin{rm}\par\smallskip\noindent {\bf Proof.}\quad}{$\qed$\end{rm}\par\medskip}
\begin{document}
\thispagestyle{empty}
\maketitle

\begin{abstract}
  It is known that in the Minkowski sum of $r$ polytopes in dimension
  $d$, with $r<d$, the number of vertices of the sum can potentially
  be as high as the product of the number of vertices in each
  summand~\cite{Fukuda07}. However, the number of vertices for sums of
  more polytopes was unknown so far.

  In this paper, we study sums of polytopes in general orientations,
  and show a linear relation between the number of faces of a sum of
  $r$ polytopes in dimension $d$, with $r\geq d$, and the number of
  faces in the sums of less than $d$ of the summand polytopes. We
  deduce from this exact formula a tight bound on the maximum possible
  number of vertices of the Minkowski sum of any number of polytopes
  in any dimension. In particular, the linear relation implies that a
  sum of $r$ polytopes in dimension $d$ has a number of vertices in
  $O(n^{d-1})$ of the total number of vertices in the summands, even
  when $r\geq d$. This bound is tight, in the sense that some sums do
  have that many vertices.
\end{abstract}
\setcounter{page}{0}
\clearpage
\section{Introduction}
The \emph{Minkowski sum} of two polytopes is defined as
$P_1+P_2=\{x_1+x_2\;:\;x_1\in P_1,\;x_2\in P_2\}$. Minkowski sums are
of interest in various fields of theoretical and applied
mathematics. While some applications only require sums of two
polytopes in low dimensions (e.g. motion planning~\cite{Halperin04}\cite{Lozano79}),
others use iterative sums of many polytopes in higher
dimensions~\cite{Pachter05}\cite{Zhang09}. It is therefore desirable to
study the complexity of such
sums. %todo: more applications, there is room for one more line

A trivial bound on the number of vertices of a sum is found as
follows. Every vertex of a Minkowski sum decomposes into a sum
of vertices of the summands. Therefore, there cannot be more vertices
in the sum than there are possible decompositions. Thus, a trivial
bound on the number of vertices in a Minkowski sum is the product of
the number of vertices in the summands. That is, if $P_1,\ldots,P_r$
are polytopes, and $f_0(P)$ is the number of vertices of a polytope,
then
% $$
% f_0(P_1+\cdots+P_r) \leq \prod_{i=1}^r f_0(P_i).
% $$
$f_0(P_1+\cdots+P_r) \leq f_0(P_1)\cdots f_0(P_r)$.

If we sum $r$ polytopes in dimension $d$, with $r<d$, then the trivial
bound is tight, that is, it is possible to choose summands with any
number of vertices so that their sum has as many vertices as the trivial
bound~\cite{Fukuda07}.
% The trivial bound can also be reached when summing $r$ polytopes in
% dimension $d$, with $r=d$, if all summands are segments. This is
% equivalent to a $d$-cube with $2^d$ vertices.

However, if we sum $r$ polytopes in dimension $d$ with $r\geq d$, the
trivial bound cannot be reached, except when summing $d$
segments~\cite{Sanyal07}\cite{Weibel07}. We assume here and in the
rest of the article that all polytopes have at least two vertices,
since a summand of only one vertex can be ignored without changing the
properties of the sum.

The f-vector of a polytope encodes its number of faces of different
dimensions. Maximal f-vectors are obtained for a particular class of
Minkowski sums, called sums of polytopes \emph{in general
  orientations}. We will therefore restrict our study to such sums.

We recently presented in~\cite{Fogel09} a result on sums of
$3$-dimensional polytopes in general orientations. We showed that the
number of vertices in a sum of $r$ summands can be deduced from the
number of vertices in the summands and the number of vertices in sums
of each of the ${r\choose 2}$ pairs of summands. Using this result, we
found a tight upper bound on the number of vertices and facets in sums
of $3$-dimensional polytopes.

The basic reasoning of this previous result is to define a unique
witness, called \emph{western-most corner}, for all but two vertices of
a polytope. These witnesses have the property that a western-most corner
for a Minkowski sum of any number of summands is also necessarily a
western-most corner for the sum of some pair of the summands. So the
number of western-most corners in the total sum, and thus its number
of vertices, can be found by examining sums of one or two summands only.

This prompted us to examine the possibility of extending the reasoning
to higher dimensions and other faces, which resulted in this
article. Our main result is presented in Theorem \ref{thm:main}. It is
a linear relation between the number of faces of a sum of $r$
polytopes and the number of faces in the sums of less than $d$ of the
summand polytopes:
\begin{thm}\label{thm:main}
  Let $P_1,\ldots,P_r$ be $d$-dimensional polytopes in general
  orientations, $r\geq d$, and each polytope full-dimensional. For any
  $k$ in $0,\ldots,d-1$,
$$
f_k(P_1+\cdots+P_r)-\alpha=\sum_{j=1}^{d-1}(-1)^{d-1-j} {{r-1-j} \choose {d-1-j}} \sum_{S\in \mathcal{C}_j^r}(f_k(P_S)-\alpha),
$$
where $\mathcal{C}_j^r$ is the family of subsets of $\{1,\ldots,r\}$ of
cardinality $j$, $P_S$ is the sum of polytopes $\sum_{i\in S}P_i$;
 $\alpha=2$ if $k=0$ and $d$ is odd,  $\alpha=0$ otherwise.
\end{thm}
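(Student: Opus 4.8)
Throughout I use the normal-fan picture. For any index set $S$, the normal fan of $P_S=\sum_{i\in S}P_i$ is the common refinement of the normal fans $\mathcal{N}(P_i)$, $i\in S$, so $f_k(P_S)$ is the number of $(d-1-k)$-dimensional cells of the spherical polytopal complex $\mathcal{A}_S$ obtained by intersecting that refinement with $S^{d-1}$. General position means every cell of $\mathcal{A}_S$ is a transverse intersection of one cell per summand, with the codimensions of the component cells adding to the codimension of the intersection; in particular a cell of codimension $k$ in $\mathcal{A}_S$ lies in the interior of a full-dimensional cone of $\mathcal{N}(P_i)$ for all but at most $k$ (hence at most $d-1$) of the indices $i$. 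This is the structural reason why only sums of fewer than $d$ summands can enter the formula.

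The first step is to reduce the theorem, by bookkeeping on the Boolean lattice of subsets, to a single \emph{stability} statement. Write $g(S)=f_k(P_S)-\alpha$, and for $\emptyset\neq T\subseteq\{1,\ldots,r\}$ let $h(T)=\sum_{\emptyset\neq S\subseteq T}(-1)^{|T|-|S|}g(S)$ be the M\"obius transform, so that $g(T)=\sum_{\emptyset\neq S\subseteq T}h(S)$. I claim it is enough to prove
\begin{equation}\label{eq:stability}
h(T)=0\qquad\text{whenever }|T|\ge d .
\end{equation}
Indeed, \eqref{eq:stability} gives $g(\{1,\ldots,r\})=\sum_{\emptyset\neq S\subseteq\{1,\ldots,r\},\,|S|\le d-1}h(S)$; re-expanding each $h(S)$ in the $g(R)$ and collecting terms, the coefficient of $g(R)$ for $|R|=j$ becomes $\sum_{m=0}^{d-1-j}(-1)^m\binom{r-j}{m}$, and the classical partial-sum identity $\sum_{m=0}^{N}(-1)^m\binom{n}{m}=(-1)^N\binom{n-1}{N}$ turns this into $(-1)^{d-1-j}\binom{r-1-j}{d-1-j}$, which is exactly the asserted formula. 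Note also that $\sum_{\emptyset\neq S\subseteq T}(-1)^{|T|-|S|}=-(-1)^{|T|}$, so \eqref{eq:stability} is equivalent to the statement that the plain inclusion--exclusion sum $\sum_{\emptyset\neq S\subseteq T}(-1)^{|T|-|S|}f_k(P_S)$ equals $\alpha(-1)^{|T|+1}$: it vanishes except for vertices in odd dimension, where it is $\pm 2$.

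The geometric content is \eqref{eq:stability}, and here the plan is to build a higher-dimensional analogue of the western-most corner of~\cite{Fogel09}. Fix a generic linear functional $\ell$ on $\mathbb{R}^d$. To each $(d-1-k)$-cell $c$ of a refinement $\mathcal{A}_T$ assign as \emph{witness} the unique face of $c$ on whose relative interior $\ell$ attains its minimum over $c$. One then has to establish: (a) for generic $\ell$ this face is well defined for all cells except a small family for which the construction is degenerate; (b) the witness is a face of $\mathcal{A}_T$ supported on at most $d-1$ summands (which is immediate from the codimension bookkeeping, since a face of a $(d-1-k)$-cell has support of size at most $d-1$); and (c) forming the witness commutes with restricting to the sub-sum over the summands that actually support it, so that the number of cells sharing a prescribed witness is an intrinsic quantity of that sub-family. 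Granting (a)--(c), group the cells counted in $\sum_{\emptyset\neq S\subseteq T}(-1)^{|T|-|S|}f_k(P_S)$ by their witness: a witness supported on a proper subset $U\subsetneq T$ occurs exactly for the $S$ with $U\subseteq S\subseteq T$, hence with total alternating weight $\sum_{U\subseteq S\subseteq T}(-1)^{|T|-|S|}=0$, so all such contributions cancel. What remains is the contribution of the degenerate cells from (a); a careful count of these, in which the parity of $d$ enters through an Euler--Poincar\'e relation on the sphere $S^{d-1}$, should evaluate to exactly $\alpha(-1)^{|T|+1}$, and this is where the two exceptional vertices in odd dimension come from.

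The step I expect to be the real obstacle is making the witness precise and proving (a), (c) and the final count: showing that the minimizing face is genuinely uniquely determined and of strictly smaller support away from the degenerate cells, that the number of cells with a given witness depends only on the supporting sub-family and not on the remaining summands, and that the residual degenerate contribution comes out to $\alpha(-1)^{|T|+1}$ rather than to some other parity-dependent constant. Once \eqref{eq:stability} is in hand, the rest --- the M\"obius reduction and the elementary binomial identity of the second paragraph --- is routine.
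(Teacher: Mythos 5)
Your combinatorial reduction is fine: setting $g(S)=f_k(P_S)-\alpha$ and passing to the M\"obius transform $h$, the claim ``$h(T)=0$ for $|T|\ge d$'' does imply the stated formula, and your coefficient computation via $\sum_{m=0}^{N}(-1)^m\binom{n}{m}=(-1)^N\binom{n-1}{N}$ is a correct repackaging of the identity the paper proves separately (Lemma~\ref{lem:for}). But the geometric heart of the theorem --- your claims (a), (c) and the evaluation of the degenerate contribution --- is exactly what you leave open, and the specific witness you propose does not have the property you need. If you take a generic \emph{linear} functional $\ell$ and assign to each spherical cell the face where $\ell$ is minimized, then the level sets of $\ell$ on $S^{d-1}$ are ``parallels'' (small spheres), not geodesics; on a spherically convex cell a local minimum of $\ell$ need not be the global minimum (take two points near the equator, almost $180^\circ$ apart in longitude: the geodesic between them climbs toward the pole before descending). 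Consequently a node that is the $\ell$-minimizing face of a cell in the fine refinement $\mathcal{A}_T$ need only be a \emph{local} optimum of the containing cell in a coarser $\mathcal{A}_S$, so it need not be that cell's witness, and your claim (c) --- that the number of cells sharing a prescribed witness is intrinsic to the supporting sub-family --- breaks down; with it breaks the cancellation $\sum_{U\subseteq S\subseteq T}(-1)^{|T|-|S|}=0$ on which your proof of stability rests. This is precisely the ``subtle detail'' the paper flags at the end of Section~\ref{sec:3d} (west versus south), and it is why the paper replaces a linear objective by a rotational direction ``west'' around a generic codimension-$2$ subspace $U$, whose level sets are geodesic meridians, so that local optima of spherically convex cells are global (Lemma~\ref{lem:local}) and the witness transfers to exactly those partial sums containing the support of its node (via Lemma~\ref{lem:var}).

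The exceptional count is a second genuine gap, not a routine one. With your linear functional every Gaussian map has exactly one cell (the one containing the point of $S^{d-1}$ minimizing $\ell$) whose minimum is interior, so the natural correction would be $1$ per map for $k=0$ in every dimension --- which cannot reproduce $\alpha=2$ for $d$ odd and $\alpha=0$ for $d$ even. In the paper the value of $\alpha$ comes out of a further construction you have no analogue of: cells meeting $U$ have no western-most point, and one must restrict the Gaussian map recursively to $U^{d-2}\supset U^{d-4}\supset\cdots$, redefining west at each stage; the recursion terminates on $S^1$ (no exceptions, $d$ even) or on two polar cells (exactly the normal regions of two vertices, $d$ odd), which is where $\alpha$ comes from. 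So while your lattice-theoretic framing of the counting is a clean equivalent of the paper's bookkeeping, the proposal as it stands is missing the construction that makes any witness argument work, and the witness you chose would have to be replaced, not merely made precise.
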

A slightly more general result also applies when summands are not
full-dimensional. The intuitive explanation of the theorem is that for
any face of the whole sum, we can find a witness of its existence by
examining the faces of the same dimension in sums of $d-1$
summands. However, if that witness exists in some sum of $d-2$
summands, we will find it in many different sums of $d-1$ summands. So
we need to offset this by removing an appropriate number of times the
witnesses in sums of $d-2$ summands. But that in turn removes too many
times witnesses that exist in some sum of $d-3$ summands, so we need
to add them back, etc. This implies  that the total sum is smaller
than the term for $j=d-1$:
\begin{cor}\label{cor}
  Let $P_1,\ldots,P_r$ be $d$-dimensional polytopes in general
  orientations, $r\geq d$, and each polytope full-dimensional. For any
  $k$ in $0,\ldots,d-1$,
$$
f_k(P_1+\cdots+P_r)\leq \sum_{S\in \mathcal{C}_{d-1}^r}f_k(P_S).
$$
\end{cor}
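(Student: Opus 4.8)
Write $P=P_1+\cdots+P_r$. I would deduce the bound from Theorem~\ref{thm:main} by splitting off its leading term and showing that everything left behind is nonpositive; the cleanest way to see the latter is to recall the combinatorial skeleton on which the proof of Theorem~\ref{thm:main} is built. In the notation of that proof, each $k$-face of every sub-sum $P_S$ is charged to a unique \emph{witness}, a witness has a well-defined minimal support $T\subseteq\{1,\ldots,r\}$ of cardinality at most $d-1$, and it occurs in $P_{S'}$ exactly when $S'\supseteq T$ (this is the multi-summand generalisation of the western-most-corner property, and is where general orientation enters). Writing $v(T)\ge 0$ for the number of witnesses with minimal support exactly $T$, the proof of the theorem yields $f_k(P_S)-\alpha=\sum_{T\subseteq S}v(T)$ for every nonempty $S$, and $f_k(P)-\alpha=\sum_{|T|\le d-1}v(T)$.

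Granting this, the Corollary is immediate. Since $|T|\le d-1\le r-1$ forces $\binom{r-|T|}{d-1-|T|}\ge 1$,
$$
f_k(P)-\alpha=\sum_T v(T)\ \le\ \sum_T v(T)\binom{r-|T|}{d-1-|T|}=\sum_{S\in\mathcal{C}_{d-1}^r}\ \sum_{T\subseteq S}v(T)=\sum_{S\in\mathcal{C}_{d-1}^r}\bigl(f_k(P_S)-\alpha\bigr),
$$
and adding $\alpha$ to both sides, with $\alpha\le\binom{r}{d-1}\alpha$, gives $f_k(P)\le\sum_{S\in\mathcal{C}_{d-1}^r}f_k(P_S)$. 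This is consistent with reading Theorem~\ref{thm:main} directly: its $j=d-1$ coefficient is $\binom{r-d}{0}=1$, so the theorem says $f_k(P)=\sum_{S\in\mathcal{C}_{d-1}^r}f_k(P_S)-(\binom{r}{d-1}-1)\alpha+R$, where $R:=\sum_{j=1}^{d-2}(-1)^{d-1-j}\binom{r-1-j}{d-1-j}\sum_{S\in\mathcal{C}_j^r}(f_k(P_S)-\alpha)$; substituting the decomposition of $f_k(P_S)-\alpha$ into $R$ and swapping sums turns the coefficient of $v(T)$ into $\sum_{j=|T|}^{d-2}(-1)^{d-1-j}\binom{r-1-j}{d-1-j}\binom{r-|T|}{j-|T|}=1-\binom{r-|T|}{d-1-|T|}\le 0$, so $R\le 0$, the evaluation of the alternating sum being exactly the binomial identity that makes Theorem~\ref{thm:main} an equality.

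The real work, and the only genuine obstacle, is the structural input of the first paragraph: one must justify that a $k$-face of $P$ with minimal support $T$ survives as a face \emph{of the same dimension} in every $P_S$ with $S\supseteq T$, and that no witness needs more than $d-1$ summands. Both facts are inherited from the proof of Theorem~\ref{thm:main}, so the economical route is to phrase that proof so that it produces the nonnegative counts $v(T)$ explicitly, after which this Corollary costs two lines. If instead one insists on using only the \emph{statement} of Theorem~\ref{thm:main}, the obstacle migrates to showing that $S\mapsto f_k(P_S)$ has nonnegative M\"obius inversion over the Boolean lattice (a higher-order monotonicity of Minkowski sums); the binomial identity above is then the only extra ingredient, and it is a routine alternating-sum evaluation worth isolating as a lemma.
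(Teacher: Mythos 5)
Your argument is correct and is essentially the paper's own proof: the paper likewise observes that every western-most-corner witness has support of cardinality at most $d-1$, hence appears at least once among the Gaussian maps of partial sums of $d-1$ summands, so the $j=d-1$ term of Theorem~\ref{thm:main} dominates. Your $v(T)$ bookkeeping and the check that the remainder $R\le 0$ (via the identity of Lemma~\ref{lem:for}) are just a more explicit rendering of that same counting step.
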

From this result, we deduce bounds for the maximum possible
number of vertices in a Minkowski sum of polytopes, for fixed number of
vertices in the summands.

We find in particular that a sum of $r$ polytopes in dimension $d$,
$r\geq d$, where summands have $n$ vertices in total, has less than ${n
  \choose {d-1}}$ vertices, which is in $O(n^{d-1})$. In the case
where each summand has at most $n$ vertices, then the number of
vertices of the sum is less than ${r \choose {d-1}} n^{d-1}$, which is
in $O(r^{d-1}n^{d-1})$. This is better than the previous known bound
which was in $O(r^{d-1}n^{2(d-1)})$~\cite{Gritzmann93}.

% The most complex part was to define a concept replacing west in higher
% dimensions. In the $S^{d-1}$ sphere, it is possible to use a standard
% parametrization of hyperspheres to define a concept equivalent to west
% everywhere except on a subset of $S^{d-1}$ equivalent to the $S^{d-3}$
% sphere. We found it is possible to supplement our concept of west on
% $S^{d-1}$ by using the same concept on $S^{d-3}$; this recursively
% implies using the concept on $S^{d-5}$, and so on until we find a
% trivial case.

In the rest of the article, we shortly present the theory in
Section~\ref{sec:mink}. We first give an introduction to the concepts
of west and western-most corner in three dimensions in
Section~\ref{sec:3d}, then extend them formally to higher dimensions
in Section~\ref{sec:wit}. We examine in Section~\ref{sec:max} what are
the maximum possible number of faces of a Minkowski sum and conclude
in Section~\ref{sec:con}.

\section{Minkowski sums}\label{sec:mink}
Let $P_1,\ldots,P_r$ be given polytopes. Their Minkowski sum is the polytope
defined as $P_1+\cdots+P_r = \{x_1+\cdots+x_r\::\:x_i\in P_i,\;\forall i\}$.
We assume in the following, and in the rest of the article, that
every polytope is full-dimensional.

A nontrivial \emph{face} of a polytope $P$ in dimension $d$ is the
intersection of $P$ with a support hyperplane of $P$. Vertices, edges,
facets, ridges are the faces of dimension $0$, $1$, $d-1$ and $d-2$
respectively. Thus, we can associate to each vector in the unit sphere
$S^{d-1}$ a face of the polytope, which is the intersection of the polytope
with the support hyperplane to which the vector is outwardly normal:
$\mathcal{S}(P;l) = \{x\in P\;:\:\langle l,x\rangle \geq \langle l,y\rangle,\;\forall y\in P\}$.

Conversely, each face $F$ of a $d$-dimensional polytope $P$ can be
associated with a region of the sphere $S^{d-1}$, called the
\emph{normal region}, which is the set of unit vectors outwardly
normal to some support hyperplane of $P$ whose intersection with $P$
is $F$: $\mathcal{N}(F;P)=\{l\in
S^{d-1}\;|\,F=\mathcal{S}(P;l)\}=\{l\in S^{d-1}\;|\;\langle
l,x\rangle>\langle l,y\rangle,\;\forall x\in F,\;y\in P\setminus
F\}$. The normal region of a facet of $P$ is thus a single point of
$S^{d-1}$, corresponding to the unit vector outwardly normal to the
facet. The normal region of a face of dimension $k$ is a relatively
open subset of $S^{d-1}$ of dimension $d-1-k$.

We call a subset of the sphere $S^{d-1}$ \emph{spherically convex} if
for any two points in the subset, any shortest arc of great circle
between the two points is inside the subset.\footnote{There exist
  different definitions of convexity on a sphere. Note that according
  to this one, the only convex set containing antipodal points is the
  whole sphere.} If the polytope $P$ is full-dimensional, the normal
regions of faces of $P$ are all disjoint, relatively open and
spherically convex. They determine a subdivision of $S^{d-1}$ into a
spherical cell complex which we call the \emph{Gaussian map} of the polytope:
$\mathcal{G}(P)=\{\mathcal{N}(F;P)\;:\;F\mbox{ face of }P\}$.

A property of Minkowski sums is that faces of the sum have a unique
\emph{decomposition} in faces of the summand. Let $F$ be a face of the
Minkowski sum $P=P_1+\cdots+P_r$, and $l$ be in $\mathcal{N}(F;P)$.
Then $F=F_1+\cdots+F_r$, where $F_i= \mathcal{S}(P_i;l)$ is a face of
$P_i$. We can deduce that the normal region of a face of the sum is
equal to the intersection of the normal regions of the faces in its
decomposition:
$\mathcal{N}(F;P)=\mathcal{N}(F_1;P_1)\cap\cdots\cap\mathcal{N}(F_r;P_r)$.
Thus the Gaussian map of the Minkowski sum is the \emph{common
  refinement} of the Gaussian map of the summands:
$$
\mathcal{G}(P_1+\cdots+P_r)=
\{\mathcal{N}(F_1;P_1)\cap\cdots\cap\mathcal{N}(F_r;P_r)\;:\;
F_i\mbox{ face of }P_i\}.
$$
A polytope and its Gaussian map being dual structures, it is possible
to study the number of faces of a polytope by studying the number of
cells of its Gaussian map.

% Minkowski sums with maximum number of vertices can be obtained when
% summands are \emph{in general orientations}.
We say that a face of a Minkowski sum has an \emph{exact
  decomposition} $F=F_1+\cdots+F_r$ when its dimension is the sum of
the dimension of the faces in its decomposition:
$\dim(F)=\dim(F_1)+\cdots+\dim(F_r)$. That is, the decomposition is
exact when there are no two parallel segments inside different faces
in the decomposition. We say that polytopes are \emph{in general
  orientations} when all faces of their Minkowski sum have an exact
decomposition.

% This assumption has a dual equivalent about the
% dimension of the normal regions: Since the normal region of a face $F$
% has dimension $\dim(\mathcal{N}(F;P)=d-1-\dim(F)$, we have $
% d-1-\dim(\mathcal{N}(F;P)=(d-1-\dim(\mathcal{N}(F_1;P_1))+\cdots+(d-1-\dim(\mathcal{N}(F_r;P_r))),$
% and so
% $$
% \dim(\mathcal{N}(F;P)=(r-1)(d-1)-(\dim(\mathcal{N}(F_1;P_1))+\cdots+\dim(\mathcal{N}(F_r;P_r))).
% $$

For fixed f-vectors of summands, the maximum number of faces of any
dimension in the sum can always be reached by summands in general
orientations~\cite{Fukuda08}. Therefore, we can assume summands are in
general orientations when looking for sums with maximum number of
faces.

Let $F$ be a face of the Minkowski sum $P=P_1+\cdots+P_r$ of
$d$-dimensional polytopes in general orientations. The face $F$
decomposes into a sum $F_1+\cdots+F_r$ of faces of the summands, with
$\dim(F)=\dim(F_1)+\cdots+\dim(F_r)$. Even if $r\geq d$, there are
therefore at most $\dim(F)$ faces in the decomposition that have a
dimension of more than $0$. Let the \emph{support} $I_F\subseteq
\{1,\ldots,r\}$ of $F$ be the set of indices of these faces, with
$|I_F|\leq \dim(F)$. Note that for any subface $G$ of $F$, $G$
decomposes into a sum $G_1+\cdots+G_r$, where $G_i\subseteq F_i$ for
all $i$; and so, $I_G\subseteq I_F$.

For any $S=\{i_1,\ldots,i_s\}$ subset of $\{1,\ldots,r\}$, let us
define the \emph{partial sum} $P_S=P_{i_1}+\cdots+P_{i_s}$.
\begin{lemma}
  Let $F$ be a facet of a Minkowski sum $P=P_1+\cdots+P_r$ of
  $d$-dimensional polytopes in general orientations. Its normal region
  $\mathcal{N}(F;P)$ is a node of $\mathcal{G}(P)$. Then
  $\mathcal{N}(F;P)$ is also a node of the Gaussian map
  $\mathcal{G}(P_S)$ of a partial sum if and only if $I_F\subseteq S$.
\end{lemma}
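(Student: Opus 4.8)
The claim has two directions. For the easy direction, suppose $I_F\subseteq S$. Write $F=F_1+\cdots+F_r$ with $\dim(F)=d-1$ and $\dim(F_i)=0$ for $i\notin I_F$. Pick $l\in\mathcal{N}(F;P)$; then $l\in\mathcal{N}(F_i;P_i)$ for every $i$, and in particular $\mathcal{S}(P_S;l)=\sum_{i\in S}F_i$, which has dimension $\sum_{i\in S}\dim(F_i)=\sum_{i\in I_F}\dim(F_i)=d-1$, using $I_F\subseteq S$ and that summands not in $I_F$ contribute vertices. Hence $\mathcal{S}(P_S;l)$ is a facet of $P_S$, so $l$ lies in a node of $\mathcal{G}(P_S)$. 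It remains to see that this node is exactly the single point $\mathcal{N}(F;P)$; for this I would argue that $\mathcal{N}(\mathcal{S}(P_S;l);P_S)=\bigcap_{i\in S}\mathcal{N}(F_i;P_i)$, and that intersecting further with $\bigcap_{i\notin S}\mathcal{N}(F_i;P_i)$ does not shrink it, because each $\mathcal{N}(F_i;P_i)$ for $i\notin S$ is full-dimensional in $S^{d-1}$ (as $F_i$ is a vertex) and contains the point $l$ in its relative interior—so a neighbourhood argument shows the intersection with $S$ alone is already zero-dimensional, i.e.\ the node $\{l\}$.

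For the converse, suppose $\mathcal{N}(F;P)=\{l\}$ is also a node of $\mathcal{G}(P_S)$, i.e.\ $\mathcal{S}(P_S;l)=\sum_{i\in S}\mathcal{S}(P_i;l)$ is a facet of $P_S$, hence has dimension $d-1$. By exactness of decompositions (general orientations), $d-1=\sum_{i\in S}\dim(\mathcal{S}(P_i;l))$. On the other hand, $\mathcal{S}(P_i;l)=F_i$ for all $i$, and $\sum_{i=1}^r\dim(F_i)=\dim(F)=d-1$ as well, with every $\dim(F_i)\ge 0$. Comparing, $\sum_{i\notin S}\dim(F_i)=0$, so $\dim(F_i)=0$ for all $i\notin S$, which means exactly that $I_F\subseteq S$.

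The main obstacle is the easy direction's final bookkeeping: making precise that the node of $\mathcal{G}(P_S)$ containing $l$ coincides with $\mathcal{N}(F;P)$ rather than being some strictly larger cell. The cleanest route is dimension counting on normal regions: $\mathcal{N}(\mathcal{S}(P_S;l);P_S)=\bigcap_{i\in S}\mathcal{N}(F_i;P_i)$ has dimension $d-1-\dim(\mathcal{S}(P_S;l))=0$ by the formula relating the dimension of a face of a $d$-polytope to that of its normal region, so it is already a single point, necessarily $\{l\}$. I would then note that adding the remaining constraints $i\notin S$ only confirms $\{l\}=\mathcal{N}(F;P)$, since $l$ lies in each of those (open, full-dimensional) regions. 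With that, both inclusions are in hand and the equivalence follows.
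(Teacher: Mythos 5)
Your proof is correct and follows essentially the same route as the paper's: it uses exactness of decompositions (general orientations) to equate $\dim(\mathcal{S}(P_S;l))$ with $\sum_{i\in S}\dim(F_i)$ and compares this with $d-1=\sum_{i\in I_F}\dim(F_i)$, which is exactly the paper's dimension-counting argument. The only difference is that you make explicit the bookkeeping the paper leaves implicit, namely that once $\mathcal{S}(P_S;l)$ is a facet its normal region is a zero-dimensional relatively open cell containing $l$, hence equal to $\{l\}=\mathcal{N}(F;P)$.
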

\begin{proof}
  Let $F_1+\cdots+F_r$ be the decomposition of $F$, with $\dim(F_i)>0$
  if and only if $i\in I_F$. Since the summands are in general
  orientations, the decomposition is exact and
  $\dim(F)=d-1=\dim(F_1)+\cdots+\dim(F_r)=\sum_{i\in
    I_F}\dim(F_i)$. The normal region $\mathcal{N}(F;P)$ contains a
  single unit vector $l$; and $\mathcal{N}(F;P)$ is a node of
  $\mathcal{G}(P_S)$ if and only if $\dim(\mathcal{S}(P_S;l))=d-1$.
  Again, the decomposition is exact and
  $\dim(\mathcal{S}(P_S;l))=\sum_{i\in
    S}\dim(\mathcal{S}(P_i;l))=\sum_{i\in S}\dim(F_i)$. Since
  $\dim(F_i)>0$ if and only if $i\in I_F$, and $\sum_{i\in
    I_F}\dim(F_i)=d-1$, the result is obvious.
\end{proof}

\section{Sums of polytopes in dimension 3}\label{sec:3d}
To facilitate the comprehension of the proof of
Theorem~\ref{thm:main}, we present informally in this section the
argument for three dimensions, where it is more readily understood. We
present the full proof for general dimensions in
Section~\ref{sec:wit}. The result for three dimensions has already
been published, though only for vertices~\cite{Fogel09}.

In dimension $3$, the Gaussian map of a polytope is a spherical cell
complex of $S^2$, which can be described as a planar graph embedded in
$S^2$. The normal regions of facets, edges and vertices of the
polytope are nodes, edges and faces of the graph respectively. Note
that the normal regions of edges, edges of the graph, are arcs of
great circles of $S^2$. The Gaussian map of a Minkowski sum is the
\emph{overlay} of the Gaussian maps of the summands.

\begin{wrapfigure}{r}{3.4cm}
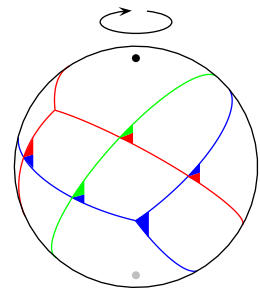

\begin{center}
\psset{unit=1.6cm,shortput=nab,linewidth=0.5pt,arrowsize=2pt 3}
\pspicture(-1,-1)(1,1)%
\psline[linecolor=blue,fillstyle=solid,fillcolor=blue](0,-.446)(0.1,-.38)(0.1,-.58)
\psline[linecolor=red,fillstyle=solid,fillcolor=red](0.43,-.08)(0.52,-.13)(0.52,-.05)
\psline[linecolor=blue,fillstyle=solid,fillcolor=blue](0.43,-.08)(0.52,0.02)(0.52,-.05)
\psline[linecolor=blue,fillstyle=solid,fillcolor=blue](-0.92,0.08)(-0.85,0)(-0.85,0.1)
\psline[linecolor=red,fillstyle=solid,fillcolor=red](-0.92,0.08)(-0.85,0.22)(-0.85,0.1)
\psline[linecolor=red,fillstyle=solid,fillcolor=red](-0.13,0.24)(-0.03,0.2)(-0.03,0.28)
\psline[linecolor=green,fillstyle=solid,fillcolor=green](-0.13,0.24)(-0.03,0.34)(-0.03,0.28)
\psline[linecolor=blue,fillstyle=solid,fillcolor=blue](-0.52,-0.255)(-0.43,-0.3)(-0.43,-0.23)
\psline[linecolor=green,fillstyle=solid,fillcolor=green](-0.52,-0.255)(-0.43,-0.13)(-0.43,-0.23)
\psset{linecolor=blue}
\parametricplot{-39.5}{65}{45 dup 30 dup t cos exch sin mul exch cos t sin mul 3 -1 roll cos mul add exch sin t sin mul}%
\parametricplot{-29}{64}{-30 dup -60 dup t cos exch sin mul exch cos t sin mul 3 -1 roll cos mul add exch sin t sin mul}%
\parametricplot{31}{95}{-60 dup -17 dup t cos exch sin mul exch cos t sin mul 3 -1 roll cos mul add exch sin t sin mul}%
\psset{linecolor=red}
\parametricplot{-30}{35}{55 dup -71 dup t cos exch sin mul exch cos t sin mul 3 -1 roll cos mul add exch sin t sin mul}%
\parametricplot{-33}{-75}{-60 dup -31 dup t cos exch sin mul exch cos t sin mul 3 -1 roll cos mul add exch sin t sin mul}%
\parametricplot{-70}{75}{-30 dup 18 dup t cos exch sin mul exch cos t sin mul 3 -1 roll cos mul add exch sin t sin mul}%
\psset{linecolor=green}
\parametricplot{-80}{100}{50 dup -20 dup t cos exch sin mul exch cos t sin mul 3 -1 roll cos mul add exch sin t sin mul}%
\psset{linecolor=black}
\parametricplot{-180}{180}{90 dup 90 dup t cos exch sin mul exch cos t sin mul 3 -1 roll cos mul add exch sin t sin mul}%% Circle
\psdots(0,0.9)
\psdots[linecolor=lightgray](0,-0.9)
\psellipticarc{<-}(0,1.2)(0.3,0.1){110}{45}
\endpspicture%
\end{center}
\caption{Example of a Gaussian map, overlay of three Gaussian maps. Each western-most corner exists in the overlay of one or two Gaussian maps
}\label{fig:wmc}
\end{wrapfigure}

Let $P=P_1+\cdots+P_r$ be a sum of $3$-dimensional polytopes in
general orientations. We choose on $S^2$ two antipodal points in
generic position, so that no edge of $\mathcal{G}(P)$ is aligned with
them. In particular, the points are inside two distinct
faces of $\mathcal{G}(P)$. We call these two points \emph{north pole}
and \emph{south pole}. We define \emph{west} in the usual way with
respect to the poles, as a direction turning around the poles,
clockwise from the north pole.

For any spherically convex subset $C$ of $S^2$ that does not contain
either pole, we define the \emph{western-most point} of $C$ as the
point in the closure of $C$ that is to the west of all points in
$C$. We also define as \emph{western-most corner} of $C$ the subset of
$C$ at distance less than $\varepsilon$ of the western-most point, for
a small $\varepsilon>0$. Note that if $C$ is a cell of
$\mathcal{G}(P)$, the western-most point is a node of $\mathcal{G}(P)$
incident to $C$, which is unique because no edge of $\mathcal{G}(P)$
is aligned with the poles.

The normal region $\mathcal{N}(F;P)$ of a facet $F$ of the Minkowski
sum $P$ is a node of $\mathcal{G}(P)$. Because the summands are in
general orientations, there are no three edges of different
$\mathcal{G}(P_i)$ intersecting in a single point. Thus, a node in
$\mathcal{G}(P)$ is either a node in some $\mathcal{G}(P_i)$, in which
case $I_F=\{i\}$, or it is the intersection of two edges in some
$\mathcal{G}(P_i+P_j)$, in which case $I_F=\{i,j\}$ (See
Figure~\ref{fig:wmc}). So a western-most corner in $\mathcal{G}(P)$ is
always a western-most corner in some $\mathcal{G}(P_i)$, or a
western-most corner in some $\mathcal{G}(P_i+P_j)$ whose western-most
point is the intersection of two edges.

So we can find the number of western-most corners in $\mathcal{G}(P)$
by counting those in all $\mathcal{G}(P_i)$ and those in all
$\mathcal{G}(P_i+P_j)$ whose western-most point is an intersection of
edges. But then, the western-most corners in $\mathcal{G}(P_i+P_j)$
also include those whose western-most point is a node of
$\mathcal{G}(P_i)$ or $\mathcal{G}(P_j)$.  Denoting as $w_k(.)$ the
number of western-most corners of $k$-dimensional cells in a Gaussian
map, this means that

$$
w_k(\mathcal{G}(P))=\sum_{i=1}^rw_k(\mathcal{G}(P_i))+ \sum_{1\leq i<j\leq
  r}(w_k(\mathcal{G}(P_i+P_j))-w_k(\mathcal{G}(P_i))-w_k(\mathcal{G}(P_j)))
$$
$$
=\sum_{1\leq i<j\leq r}w_k(\mathcal{G}(P_i+P_j))
-(r-2)\sum_{i=1}^rw_k(\mathcal{G}(P_i)),\quad k=0,1,2.
$$
Intuitively, we sum the number of western-most corners in different
$\mathcal{G}(P_i+P_j)$, and subtract $(r-2)$ times the western-most
corners in the $\mathcal{G}(P_i)$, since they are each counted $(r-1)$
times in the first sum.

But since there is one distinct western-most corner for every cell of
a Gaussian map except the two $2$-dimensional cells that contain a
pole, and the cells of a Gaussian map correspond to faces of the
underlying polytope, we have that for any polytope $\mathcal{P}$,
$w_0(\mathcal{G}(\mathcal{P}))=f_2(\mathcal{P})$,
$w_1(\mathcal{G}(\mathcal{P}))=f_1(\mathcal{P})$ and
$w_2(\mathcal{G}(\mathcal{P}))=f_0(\mathcal{P})-2$. Replacing $w_k$ in
the above equation by these, we get Theorem~\ref{thm:main} for $d=3$.

Here is a subtle detail of the argument. Let us say that a point $p$
in the closure of a subset $C$ of the sphere $S^2$ is a \emph{local
  optimum} of $C$ if $p$ is the western-most point for the
intersection of $C$ with some open set containing $p$. The reason we
use the direction west is that the level curves for west, the
meridians of geography, are arcs of great circles, i.e. geodesics;
they intersect only once any other geodesic inside a spherically
convex set. This guarantees that all local optima are also
western-most points. This would not be the case had we used the
direction south, because the level curves for south, the parallels,
are not geodesics.
% Using west guarantees that a
% node that is a western-most point for $\mathcal{G}(P)$ is still a
% western-most point when it exists in some map $\mathcal{G}(P_i)$ or
% $\mathcal{G}(P_i+P_j)$.
\section{Extension to higher dimensions}\label{sec:wit}
We extend in this section the argument of Section~\ref{sec:3d} to
higher dimensions. First, we extend the definition of west and
western-most corners. We prove that the extension has the same
property that (about) every cell of a Gaussian map has a single
western-most corner.
% maybe: a local optimum is also a western-most point
We also prove that the western-most corner of some cell in the
Gaussian map of a Minkowski sum is also a western-most corner of some
cell in any Gaussian map of a partial sum where its western-most point
is a node of the map. Finally, we present the formula that allows us
to count the number of western-most corners in the Gaussian map of the
Minkowski sum.

Here is a brief summary of the proof. Let $P=P_1+\cdots+P_r$ be a
Minkowski sum of $d$-dimensional polytopes in general
orientations. Because the summands are in general orientations, a node
of $\mathcal{G}(P)$ is also a node of $\mathcal{G}(P_S)$ if and only
if the support of its underlying facet is contained in $S$. This
implies that in all $\mathcal{G}(P_S)$ where a node exists, the local
geometry of the map around the node is the same as in
$\mathcal{G}(P)$. Therefore, if the node is a local optimum in
$\mathcal{G}(P)$, it is also a local optimum in any $\mathcal{G}(P_S)$
of which it is a node. This implies that a western-most corner in
$\mathcal{G}(P)$ is also a western-most corner in $\mathcal{G}(P_S)$
if and only if its western-most point in $\mathcal{G}(P)$ exists in
$\mathcal{G}(P_S)$, i.e. if and only if $S$ contains the support of
the underlying facet of the node. This is what allows us to use a counting
argument for deducing the number of western-most corners in
$\mathcal{G}(P)$ from the number of western-most corners in the
Gaussian map of partial sums. Since there is one western-most corner
per cell of the Gaussian map and face of the underlying polytope, this
allows us to find the number of faces of the sum $P$.

We now extend the definition of west. The Gaussian map
$\mathcal{G}(P)$ subdivides the sphere $S^{d-1}$ into a spherical cell
complex. The normal regions of ridges of $P$ are arcs of great circles
on $S^{d-1}$, edges of the Gaussian map $\mathcal{G}(P)$. Each of
these arcs of great circle is contained into the $2$-dimensional
subspace that is orthogonal to the underlying ridge. In the
$d$-dimensional space containing $S^{d-1}$, we choose a linear
subspace $U$ of dimension $d-2$, so that its intersection with every
$2$-dimensional subspace containing an edge of $\mathcal{G}(P)$ is
just the origin.  The next lemma shows that this is always possible.
\begin{lemma}
  In a $d$-dimensional space, for any finite family
  $\{U_1,\ldots,U_n\}$ of $2$-dimensional linear subspaces, it is
  possible to find a linear subspace $U$ of dimension $d-2$ such that
  $U\cap U_i=\{\bvec{0}\}$ for any $i$ in $1,\ldots,n$.
\end{lemma}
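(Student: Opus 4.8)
The plan is to proceed by induction on $n$, the number of $2$-dimensional subspaces to be avoided. The base case $n=0$ is trivial: any $(d-2)$-dimensional subspace works. For the inductive step, suppose we have found a $(d-2)$-dimensional subspace $U$ with $U\cap U_i=\{\bvec{0}\}$ for $i=1,\ldots,n-1$, and we must repair it so that it also meets $U_n$ only in the origin, without destroying the earlier properties. The key observation is that a $(d-2)$-dimensional subspace and a $2$-dimensional subspace in $\RRd$ generically intersect only at the origin (their dimensions sum to $d$), so the ``bad'' configurations form a lower-dimensional set, and we can perturb $U$ slightly to escape all of them simultaneously.

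More concretely, I would argue as follows. Parametrize $(d-2)$-dimensional subspaces near $U$ by graphs of linear maps from $U$ to a fixed complement $W$ (where $\dim W=2$); this is a standard chart on the Grassmannian, an open neighborhood of $U$ homeomorphic to $\RR^{2(d-2)}$. For each fixed $i$, the condition $U'\cap U_i\neq\{\bvec{0}\}$ cuts out a proper algebraic subset of this parameter space: it is the vanishing locus of the $d\times d$ minors of the matrix whose columns span $U'$ and $U_i$ together, and since by hypothesis $U\cap U_i=\{\bvec{0}\}$ for $i<n$ (so $U$ itself is not in this locus for those $i$), and since for $i=n$ we can first move $U$ generically to get off the locus for $i=n$ as well, each such subset is proper and hence has empty interior. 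A finite union of proper algebraic subsets of $\RR^{2(d-2)}$ still has empty interior, so there is a subspace $U'$ arbitrarily close to $U$ avoiding all of $U_1,\ldots,U_n$ nontrivially, which completes the induction.

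An alternative, more elementary route avoids the Grassmannian entirely: build $U$ one basis vector at a time. Pick $v_1$ outside the union $U_1\cup\cdots\cup U_n$ (possible since each $U_i$ is a proper subspace of $\RRd$ and a vector space over an infinite field is not a finite union of proper subspaces). Having chosen independent $v_1,\ldots,v_m$ spanning $V_m$, choose $v_{m+1}$ outside $V_m$ and outside each of the subspaces $V_m + U_i$; the latter is proper in $\RRd$ as long as $V_m\cap U_i=\{\bvec{0}\}$, which one maintains inductively provided $\dim V_m + 2 \le d$, i.e. $m\le d-2$. After $d-2$ steps we obtain $U=V_{d-2}$ with $U\cap U_i=\{\bvec{0}\}$ for all $i$, since any nonzero vector of $U\cap U_i$ would force some $v_{m+1}\in V_m+U_i$. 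The main thing to be careful about is the dimension bookkeeping: one must check that $V_m\cap U_i=\{\bvec 0\}$ is genuinely preserved at each step, which follows because $v_{m+1}\notin V_m+U_i$ precisely rules out the creation of a new nontrivial intersection vector. This inductive-construction argument is the one I would expect the author to give, as it is self-contained and the ``obstacle'' — ensuring $V_m + U_i$ stays a proper subspace — is handled cleanly by the dimension count $m \le d-2$.
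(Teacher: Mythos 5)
Your second, elementary construction is correct and complete, with one small bookkeeping remark: properness of $V_m+U_i$ requires $\dim V_m+2<d$, i.e.\ $m\le d-3$, not $m\le d-2$ as you wrote; but this is harmless, since the last vector you choose is $v_{d-2}$, chosen when $\dim V_m=d-3$, so every choice you actually make falls in the good range. It is, however, not the route the paper takes: the paper dualizes. It passes to the orthogonal complements $U_i^\perp$, which have dimension $d-2$, picks $\bvec{u}$ outside all of them, then picks $\bvec{v}$ outside all of the $(d-1)$-dimensional subspaces $span(\{\bvec{u}\}\cup U_i^\perp)$, and sets $U=span(\{\bvec{u},\bvec{v}\})^\perp$; any vector of $U\cap U_i$ is then orthogonal to $span(\{\bvec{u},\bvec{v}\}\cup U_i^\perp)=\RRd$, hence is $\bvec{0}$. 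Both arguments rest on the same underlying fact, that $\RRd$ is not a finite union of proper subspaces, but the paper's dual version needs only two generic vector choices regardless of $d$ (it builds the $2$-dimensional complement of $U$ transverse to the $U_i^\perp$ and then takes perps), whereas yours builds $U$ directly in $d-2$ steps and never invokes orthogonality, so it works verbatim over any infinite field. Your first sketch, the Grassmannian perturbation, is morally fine but the weakest of the three as written: to know that each bad locus is a \emph{proper} algebraic subset of the chart you must already exhibit some $(d-2)$-dimensional subspace meeting $U_n$ only at the origin, which is essentially the content of the lemma itself, so the elementary construction is the one to keep.
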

\begin{proof}
  The orthogonal complements $U_i^\perp$ of the subspaces $U_i$ in the
  family are of dimension $d-2$. If we choose a vector $\bvec{u}$ that is
  not in these orthogonal complements, then for any $i$,
  $span(\{\bvec{u}\}\cup U_i^\perp)$ is of dimension $d-1$. If we choose now a
  vector $\bvec{v}$ that is not any of these subspaces of dimension
  $d-1$, then for any $i$, $span(\{\bvec{u},\bvec{v}\}\cup U_i^\perp)$
  is the $d$-dimensional space. Define
  $U=span(\{\bvec{u},\bvec{v}\})^\perp$. If a vector is in $U\cap
  U_i$, it is orthogonal to $span(\{\bvec{u},\bvec{v}\}\cup
  U_i^\perp)$, which is the whole space, and so it is the origin
  $\bvec{0}$.
\end{proof}

This is a simple extension of the fact that in three dimensions, for
any number of planes going through the origin, we can choose
a vector that is in none of the planes.

Note that the intersection of $U$ with $S^{d-1}$ is equivalent to the
sphere $S^{d-3}$. If $d=3$, the intersection is the two antipodal
points on $S^2$ that we named north and south poles in
Section~\ref{sec:3d}, and if $d=2$, it is the empty set because
$U=\{\bvec{0}\}$.

We choose an orthonormal basis $\bvec{e_1},\ldots,\bvec{e_d}$ of the
$d$-dimensional space such that
$U=span(\{\bvec{e_3},\ldots,\bvec{e_d}\})$.
We then define successive parametrizations of the spheres
$S^n$, $1\leq n\leq d-1$ as follows:
$$
S^1 = \{\sin(\theta_1)\bvec{e_1}+\cos(\theta_1)\bvec{e_2}\;:\;\theta_1 \in [0,2\pi)\},
$$
$$
S^n = \{\sin(\theta_n)S^{n-1}+\cos(\theta_n)\bvec{e_{n+1}}\;:\;\theta_n \in [0,\pi]\},\quad n=2,\ldots,d-1.
$$
Note that a point of $S^{d-1}$ is in $U$ if and only if
$\sin(\theta_j)=0$ for some $j=2,\ldots,{d-1}$. For any point of
$S^{d-1}$ not in $U$, we define the direction west as
$\dot{\theta}_1$, the direction of augmentation of $\theta_1$. Note
that for $d=3$, it is equivalent to the definition of
Section~\ref{sec:3d}, and for $d=2$, it is a direction running around
$S^1$. In $S^{d-1}$, west is not defined on the subspace $U$ because
$\dot{\theta}_1=0$, so that the intersection of $U$ with $S^{d-1}$ is
a sphere of dimension $d-3$ that plays the same role as poles in three
dimensions.

Recall that in our parametrization of $S^{d-1}$, $\theta_1$ is in
$[0,2\pi)$. Formally, for any points $p$ and $q$ of $S^{d-1}$ that are
not in $U$, we say that $p$ is \emph{to the west} of $q$ if
$\theta_1(p)\in [\theta_1(q),\theta_1(q)+\pi]$ and $\theta_1(q)<\pi$,
or if $\theta_1(p)\in [\theta_1(q),2\pi)\cup[0,\theta_1(q)-\pi]$ and
$\theta_1(q)\geq\pi$.
% gotcha! This is sufficient, because we do not claim that nothing
% is to the west of the western-most point, only that it is to the
% west of others.

For any spherically convex subset $C$ of $S^{d-1}$ that does not
intersect $U$, we define the \emph{western-most point} of $C$ as the
point in the closure of $C$ that is to the west of all points in
$C$. The next lemma, proved in Appendix~\ref{app:ls},
shows that the western-most point exists.
\begin{lemma}\label{lem:sense}
  If a spherically convex subset $C$ of $S^{d-1}$ does
  not intersect $U$, it is in a hemisphere defined by $\theta_1\in
  [\alpha,\alpha+\pi]$ or $\theta_1\in [0,\alpha]\cup[\alpha+\pi,2\pi)$ for
  some $\alpha\in[0,\pi)$.
  % gotcha! This is sufficient, because we do not claim that nothing
  % is to the west of the western-most point, only that it is to the
  % west of others.
\end{lemma}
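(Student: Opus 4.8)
The plan is to reformulate the statement as a fact about convex cones in the coordinate plane $V:=\mathrm{span}(\bvec{e_1},\bvec{e_2})$. Let $\rho:\RRd\to V$ be the orthogonal projection; by construction $\ker\rho=U$, and for $p\notin U$ the quantity $\theta_1(p)$ is exactly the angular coordinate of the nonzero vector $\rho(p)$ in $V$, because the $\bvec{e_1},\bvec{e_2}$-components of any point of $S^{d-1}$ form a nonnegative multiple of $(\sin\theta_1,\cos\theta_1)$. Under this dictionary, the two alternatives in the statement say precisely that $\rho(C)$ lies in a closed half-plane $\{x\in V:\langle x,\bvec{w}\rangle\ge 0\}$ through the origin of $V$ (equivalently, $\{\theta_1(p):p\in C\}$ lies in a closed arc of length $\pi$); the two cases merely record whether that arc straddles $\theta_1=0$.

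So it is enough to produce such a half-plane. I would first translate spherical convexity into cone convexity: by the paper's convention, the radial cone $K:=\{\lambda x:\lambda\ge 0,\ x\in C\}$ is a convex cone in $\RRd$ with $C=K\cap S^{d-1}$ --- for non-antipodal $p,q\in C$ the shortest great-circle arc between them is the normalization of the segment $[p,q]$, hence lies in $C$, which gives convexity; and if $C$ contained an antipodal pair then $C=S^{d-1}$. The hypothesis $C\cap U=\emptyset$ forces $K\cap U=\{\bvec{0}\}$, since a nonzero point of $K\cap U$ normalizes to a point of $C\cap U$. Hence $\rho(K)$ is a convex cone in $V$, and it is not all of $V$: otherwise $\bvec{e_1}=\rho(x)$ and $-\bvec{e_1}=\rho(y)$ for some $x,y\in K$, so $x+y\in K\cap U=\{\bvec{0}\}$, i.e.\ $y=-x\ne\bvec{0}$, forcing the antipodal pair $\pm x/|x|$ into $C$ and hence $C=S^{d-1}$; this is excluded, since for $d\ge 3$ it contradicts $C\cap U=\emptyset$ (as $U\cap S^{d-1}\ne\emptyset$), and for $d=2$, where $U=\{\bvec{0}\}$, we take $C\subsetneq S^1$ as understood (in the application $C$ is a cell of a Gaussian map, hence proper). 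Now I invoke the elementary planar fact that a convex cone in $V\cong\RR^2$ other than $V$ itself is contained in a closed half-plane through the origin; pulling that half-plane back through $\rho$ and reading off angular coordinates gives one of the two stated forms, with $\alpha$ determined by the direction of $\bvec{w}$.

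The step needing the most care is setting up the dictionary between the paper's nonstandard spherical convexity --- in which a set with an antipodal pair must be the whole sphere --- and ordinary cone convexity; after that, everything rests on the one-line planar fact about convex cones. The single genuinely delicate point is the degenerate case $C=S^{d-1}$, which must be excluded or handled as above; the remainder is routine. It is worth noting that this lemma is exactly what legitimizes the subsequent definition of the western-most point of $C$, since a set whose $\theta_1$-values lie in a closed arc of length $\pi$ has a well-defined westward extreme, attained on its compact closure.
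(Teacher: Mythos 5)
Your proposal is correct, but it takes a genuinely different route from the paper's own proof. The paper argues intrinsically on the sphere: it considers the set $T_1(C)$ of $\theta_1$-values over $C$ and uses Lemma~\ref{lem:two} (itself built on Lemmas~\ref{lem:subsp} and~\ref{lem:orth}) to show that two points of $C$ whose $\theta_1$-values differ by $\pi$ would force either an antipodal pair (hence $C=S^{d-1}$) or a point of $U$ onto the connecting shortest arc; a connectedness argument on $T_1(C)$ then confines $C$ to a hemisphere. You instead linearize the problem: you pass to the radial cone $K$ over $C$ (convex exactly because of the paper's convention that a spherically convex set containing antipodal points is the whole sphere), observe $K\cap U=\{\bvec{0}\}$, project orthogonally onto $span(\{\bvec{e_1},\bvec{e_2}\})$ -- whose kernel is precisely $U$ and on which $\theta_1$ is the angular coordinate, since the $\bvec{e_1},\bvec{e_2}$-component of a point of $S^{d-1}$ is $\bigl(\prod_{j\geq 2}\sin\theta_j\bigr)(\sin\theta_1\bvec{e_1}+\cos\theta_1\bvec{e_2})$ with nonnegative coefficient -- and invoke the fact that a planar convex cone other than the whole plane lies in a closed half-plane through the origin. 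What your route buys: it dispenses with Lemma~\ref{lem:two} entirely and sidesteps the paper's slightly delicate assertion that $T_1(C)$ is connected (true only modulo $2\pi$, because of the wrap-around at $\theta_1=0$); the small cost is that the half-plane fact should be justified in one line for possibly non-closed cones (either $\bvec{0}$ is interior to the cone, which forces the cone to be the whole plane, or the supporting-hyperplane theorem at the origin applies). Your explicit handling of the degenerate case $C=S^{d-1}$ is also apt: the paper dismisses the antipodal case with ``$C$ is the whole sphere and intersects $U$'', which, like your argument, implicitly needs $d\geq 3$; for $d=2$ the lemma is only ever applied to proper cells, as you note. Both arguments establish the statement; yours is arguably cleaner and more self-contained in linear-algebraic terms, while the paper's stays on the sphere and reuses machinery (Lemma~\ref{lem:two}) that it needs anyway for Lemma~\ref{lem:local}.
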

We also define as \emph{western-most corner} of $C$ the subset of $C$
at distance less than $\varepsilon$ of the western-most point, where
$\varepsilon>0$ is smaller than the distance between any two
non-incident cells in $\mathcal{G}(P)$. Note that the western-most
point of $C$ is also the western-most point of the western-most corner
of $C$.

Recall that the Gaussian map of a polytope is a subdivision of
$S^{d-1}$ into a spherical cell complex.  For any cell $C$ of
$\mathcal{G}(P)$ that does not intersect $U$, the western-most point
of $C$ is a node incident to $C$, which is unique because otherwise
there would be a great circle containing an edge of $\mathcal{G}(P)$
and intersecting $U$, which contradicts the way we chose $U$. As a
consequence, there is one unique western-most corner for each cell of
a Gaussian map that does not intersect $U$.

We now prove that a western-most corner of some cell of
$\mathcal{G}(P)$ is also a western-most corner of a cell of the
Gaussian map of any partial sum $\mathcal{G}(P_S)$ if and only if its
western-most point is a node of $\mathcal{G}(P_S)$. We call a point
$p$ in the closure of a subset $C$ of $S^{d-1}$ a \emph{local optimum}
of $C$ if $p$ is the western-most point of the intersection of $C$
with some open subset of $S^{d-1}$ containing $p$.
\begin{lemma}\label{lem:local}
  A point $p$ is a local optimum of a cell $C$ of a Gaussian map $G$
  if and only if it is a western-most point of $C$.
\end{lemma}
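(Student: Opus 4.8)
The plan is to prove both directions of Lemma~\ref{lem:local}. One direction is immediate: if $p$ is the western-most point of $C$, then by definition it is to the west of every point of $C$, hence of every point in $C\cap O$ for any open $O$ containing $p$; moreover $p$ lies in the closure of $C\cap O$ (since $p$ is a node incident to $C$, and $C$ accumulates at $p$ within any neighbourhood). So $p$ is a local optimum. The substantive direction is the converse: a local optimum of $C$ must be the global western-most point of $C$.

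For the converse I would argue by contradiction. Suppose $p$ is a local optimum of the cell $C$ but not its western-most point; let $q$ be the actual western-most point of $C$, so $q$ is strictly to the west of $p$. Pick any point $x$ in the relative interior of $C$ close enough to $p$ that $x$ lies in the open neighbourhood $O$ witnessing the local optimality of $p$, and consider the shortest arc of great circle from $x$ to $q$; by spherical convexity this arc lies in the closure of $C$ (indeed in $C$ except possibly its endpoint $q$). The key geometric fact is that the function $\theta_1$ restricted to such a geodesic arc is \emph{monotone}, with no interior local extrema, as long as the arc stays in the hemisphere given by Lemma~\ref{lem:sense} and avoids $U$. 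This is exactly the point emphasized in the $d=3$ discussion: the level sets of $\theta_1$ (the ``meridians'') are themselves great circles, and two great circles meet in only one pair of antipodal points, so $\theta_1$ cannot have an interior critical point along a geodesic that does not meet $U$. Hence along the arc from $x$ to $q$, the value $\theta_1$ moves monotonically toward $q$'s value — in particular it passes through values strictly more westerly than $\theta_1(p)$ while still inside $O\cap C$ near $x$ — unless $\theta_1(x)$ already equals $\theta_1(q)$; iterating / taking $x\to p$ this forces $\theta_1(p)=\theta_1(q)$, and since the western-most point of a cell is the unique incident node, $p=q$, a contradiction. I would phrase this cleanly as: the restriction of $\theta_1$ to the closure of $C$ attains its ``most western'' value only at $q$, and is strictly monotone along every geodesic emanating from $q$ inside $C$, so no other point can be even a local optimum.

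**The main obstacle** is handling the geometry near $U$ and the wrap-around of $\theta_1$ at $2\pi$ carefully. The arc from $x$ to $q$ could a priori pass near $U$, where west is undefined and the monotonicity statement breaks; here I would invoke Lemma~\ref{lem:sense}, which says the whole cell $C$ (being spherically convex and disjoint from $U$) lies in one of the two specified hemispheres, and on that hemisphere $\theta_1$ takes values in a half-open interval of length $\pi$ on which the ``to the west of'' relation is a genuine linear order — so the wrap-around never interferes and the monotonicity argument is valid on the whole arc. A second, more routine point is confirming that a local optimum of $C$ must in fact be a node of $G$ incident to $C$: since $\varepsilon$ was chosen smaller than the distance between non-incident cells, and since $\theta_1$ has no interior extrema on edges (arcs of great circles) for the same meridian-versus-geodesic reason, the western-most point of $C\cap O$ can only sit at a vertex of the complex that is incident to $C$; this reduces the converse to comparing finitely many incident nodes and picking the most western one, which is $q$ by definition.

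Putting these together gives the equivalence, and I would keep the write-up short by factoring out the monotonicity fact — ``$\theta_1$ is strictly monotone along any great-circle arc inside a spherically convex set disjoint from $U$'' — as the one real lemma-within-the-proof, everything else being definitional bookkeeping.
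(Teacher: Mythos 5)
Your overall strategy is the paper's: the easy direction is definitional, and for the converse you join the optimum to the western-most point by a geodesic inside the cell and exploit that $\theta_1$ has no interior extremum along a great-circle arc whose circle misses $U$, with Lemma~\ref{lem:sense} controlling wrap-around. But the pivotal step has a genuine gap. Because you run the geodesic from an auxiliary point $x\in C\cap O$ near $p$ to $q$, rather than from $p$ itself, monotonicity along that arc only yields points of $C$ slightly to the west of $x$; local optimality only guarantees that $\theta_1(x)$ is (weakly) east of $\theta_1(p)$, so the portion of the arc that is strictly west of $p$ may lie entirely outside the witnessing open set $O$, and no contradiction follows. Your dichotomy (``the arc passes values strictly more westerly than $\theta_1(p)$ while still inside $O\cap C$ near $x$ --- unless $\theta_1(x)=\theta_1(q)$'') is false as stated, and the conclusion you extract from it by ``taking $x\to p$'', namely $\theta_1(p)=\theta_1(q)$, does not follow from what precedes it. The repair is essentially the paper's route: take the arc from $p$ itself to $p'=q$ (it lies in the closure of $C$); unless $\theta_1$ is constant on it, $\theta_1$ increases strictly from $\theta_1(p)$ along it, so points of the arc arbitrarily close to $p$ are strictly west of $p$ and lie in $O$, and one then perturbs into $C$ (or, equivalently, fixes a small parameter $t$ along the arcs and lets $x\to p$ with $t$ fixed) --- a limiting argument that your write-up gestures at but does not make.

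Second, the lemma you propose to factor out --- ``$\theta_1$ is strictly monotone along any great-circle arc inside a spherically convex set disjoint from $U$'' --- is false: an arc lying on a great circle that meets $U$ (the analogue of a meridian) has constant $\theta_1$, and such an arc fits inside a small spherically convex set disjoint from $U$. What is true is the paper's Lemma~\ref{lem:two}: $\theta_1$ is injective, hence monotone, along a great circle that misses $U$, and constant on each component otherwise; whether the circle through the two endpoints meets $U$ is governed by whether their $\theta_1$-values differ by $0$ or $\pi$. These degenerate cases are exactly where separate arguments are needed: difference $\pi$ forces $C$ to meet $U$ or be the whole sphere, and difference $0$ is excluded because a constant-$\theta_1$ arc in the closure of a cell would place an edge of the Gaussian map on a great circle meeting $U$, contradicting the choice of $U$. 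Your handling of the equal-$\theta_1$ case (``$p=q$ since the western-most point is the unique incident node'', after a sketch that local optima are incident nodes) is circular at the last step: knowing that a local optimum is a node incident to $C$ does not by itself show that a locally optimal node is the globally western-most one --- that is precisely what the lemma asserts --- so this case must be settled by the explicit argument from the choice of $U$, as in the paper.
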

The proof is in Appendix~\ref{app:ls}. For the next lemma, recall that
$\varepsilon>0$ is smaller than the distance between any two
non-incident cells in $\mathcal{G}(P)$.
\begin{lemma}\label{lem:var}
  Let $F$ be a facet of $P$, with its normal region $\mathcal{N}(F;P)$
  a node of $\mathcal{G}(P)$. Let $p$ be a point of $S^{d-1}$ at
  distance less than $\varepsilon$ of $\mathcal{N}(F;P)$. For any
  partial sum $P_S$ with $I_F\subseteq S$, the dimensions of the cells
  containing $p$ in $\mathcal{G}(P)$ and $\mathcal{G}(P_S)$ are the
  same.
\begin{proof}
  In the Gaussian map $\mathcal{G}(P)$, the subset of $S^{d-1}$ at a
  distance less than $\varepsilon$ of $\mathcal{N}(F;P)$ intersects
  only the normal regions of subfaces of $F$. Therefore, for any point
  $p$ in that subset, $\mathcal{S}(P;p)$ is a subface $G$ of
  $F$. Recall that for any subface $G$ of a facet $F$, $I_G\subseteq
  I_F$. So for any partial sum $P_S$ such that $I_F\subseteq S$,
  $I_G\subseteq S$, which means that not only $P_S$ has a facet with
  the same normal region as $F$, but $\mathcal{S}(P_S;p)$ is a subface
  of that facet with the same dimension as $G$, and $p$ is in a cell
  of the same dimension in $\mathcal{G}(P_S)$ as in $\mathcal{G}(P)$.
\end{proof}
\end{lemma}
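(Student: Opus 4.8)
The plan is to use the defining property of $\varepsilon$: it was chosen smaller than the distance between any two non-incident cells of $\mathcal{G}(P)$, so the $\varepsilon$-neighbourhood of the node $\mathcal{N}(F;P)$ meets only cells incident to that node. Since the cells incident to $\mathcal{N}(F;P)$ are exactly the normal regions of the subfaces of $F$, for any $p$ in this neighbourhood the face $G=\mathcal{S}(P;p)$ is a subface of $F$, and the cell of $\mathcal{G}(P)$ containing $p$ is $\mathcal{N}(G;P)$, which has dimension $d-1-\dim(G)$. So the whole statement reduces to comparing $\dim(\mathcal{S}(P;p))$ with $\dim(\mathcal{S}(P_S;p))$.

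First I would invoke the earlier lemma on facets of partial sums, which guarantees that because $I_F\subseteq S$ the point $\mathcal{N}(F;P)$ is also a node of $\mathcal{G}(P_S)$, so that it makes sense to talk about the cell of $\mathcal{G}(P_S)$ near $p$ in the first place. Then I would write the decompositions $\mathcal{S}(P;p)=\sum_{i=1}^r G_i$ and $\mathcal{S}(P_S;p)=\sum_{i\in S}G_i$ with $G_i=\mathcal{S}(P_i;p)$. Because the summands are in general orientations, all these decompositions are exact, so $\dim(\mathcal{S}(P;p))=\sum_{i=1}^r\dim(G_i)$ and $\dim(\mathcal{S}(P_S;p))=\sum_{i\in S}\dim(G_i)$. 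Now $\dim(G_i)>0$ precisely for $i\in I_G$, and since $G$ is a subface of $F$ we have $I_G\subseteq I_F\subseteq S$; hence both sums have exactly the same set of nonzero terms (those indexed by $I_G$) and are therefore equal. Consequently $\dim(\mathcal{S}(P_S;p))=\dim(\mathcal{S}(P;p))$, and the cells containing $p$ in $\mathcal{G}(P_S)$ and in $\mathcal{G}(P)$ both have dimension $d-1-\dim(\mathcal{S}(P;p))$, as claimed.

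The step I expect to require the most care is the first one: making precise why the $\varepsilon$-ball around the node meets only normal regions of subfaces of $F$ — this rests entirely on how $\varepsilon$ was chosen — and the accompanying observation that for every such $p$ the support of $G=\mathcal{S}(P;p)$ is contained in $I_F$. The latter is exactly the already-noted fact that subfaces inherit supports ($I_G\subseteq I_F$ whenever $G\subseteq F$), combined with the exactness of all decompositions under general orientations; once these two facts are in hand the rest is just bookkeeping with the identity $\dim(\mathcal{N}(\cdot\,;\cdot))=d-1-\dim(\cdot)$.
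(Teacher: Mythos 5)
Your proposal is correct and follows essentially the same route as the paper: use the choice of $\varepsilon$ so that the neighbourhood of $\mathcal{N}(F;P)$ meets only normal regions of subfaces $G$ of $F$, then use $I_G\subseteq I_F\subseteq S$ to conclude the cell containing $p$ has the same dimension in $\mathcal{G}(P_S)$. Your explicit dimension count via the exact decompositions $\dim(\mathcal{S}(P;p))=\sum_i\dim(G_i)$ and $\dim(\mathcal{S}(P_S;p))=\sum_{i\in S}\dim(G_i)$ merely spells out what the paper leaves implicit in the phrase ``a subface of that facet with the same dimension as $G$.''
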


We finally have the tools to prove:
\begin{lemma}
  Let $W$ be a western-most corner of a cell $C$ in $\mathcal{G}(P)$,
  with $\mathcal{N}(F;P)$ the western-most point of $C$. Then
  $W$ is a western-most corner of some cell of the same dimension
  in the Gaussian map of a partial sum $\mathcal{G}(P_S)$ if and only
  if $I_F\subseteq S$.
\begin{proof}
  First, if $I_F\not\subseteq S$, then $\mathcal{N}(F;P)$ is not a
  node of $\mathcal{G}(P_S)$, and so $W$ cannot be a western-most
  corner. Suppose $I_F\subseteq S$; then $\mathcal{N}(F;P)$ is a node
  of $\mathcal{G}(P_S)$. Furthermore, by Lemma~\ref{lem:var}, the
  points in $W$ are in a cell of the same dimension in
  $\mathcal{G}(P_S)$ as in $\mathcal{G}(P)$, and the points in the
  closure of $W$ are in a cell of the same dimension in
  $\mathcal{G}(P_S)$ as in $\mathcal{G}(P)$. As a consequence, since
  $\mathcal{N}(F;P)$ is the western-most point of $W$ in
  $\mathcal{G}(P)$, it is also the western-most point of $W$ in
  $\mathcal{G}(P_S)$. But $W$ is the intersection, of the cell it is in,
  with an open subset, and so $\mathcal{N}(F;P)$ is a local optimum of
  the cell that contains $W$ in $\mathcal{G}(P_S)$. By
  Lemma~\ref{lem:local}, it is also the western-most point of the cell
  that contains $W$ in $\mathcal{G}(P_S)$, and so $W$ is the
  western-most corner of that cell in $\mathcal{G}(P_S)$.
\end{proof}
\end{lemma}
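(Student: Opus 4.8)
The plan is to handle the two implications separately, both resting on one observation: a western-most corner remembers its western-most point. Since $\mathcal{N}(F;P)$ lies in the closure of $W$ and is to the west of every point of $W$, it is \emph{the} western-most point of $W$ --- a purely geometric fact that does not depend on which Gaussian map we regard $W$ as living in. Hence if $W$ is a western-most corner of any cell of $\mathcal{G}(P_S)$, the western-most point of that cell must be $\mathcal{N}(F;P)$, so in particular $\mathcal{N}(F;P)$ must be a node of $\mathcal{G}(P_S)$. By the earlier lemma characterising when $\mathcal{N}(F;P)$ is a node of $\mathcal{G}(P_S)$ --- namely, exactly when $I_F\subseteq S$ --- this already disposes of the ``only if'' direction: if $I_F\not\subseteq S$, then $\mathcal{N}(F;P)$ is not a node of $\mathcal{G}(P_S)$, and $W$ can be the western-most corner of no cell of $\mathcal{G}(P_S)$.

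For the ``if'' direction I would assume $I_F\subseteq S$, so that $\mathcal{N}(F;P)$ is a node of $\mathcal{G}(P_S)$. Since $\mathcal{G}(P)$ refines $\mathcal{G}(P_S)$, the cell $C$ sits inside a unique cell $C'$ of $\mathcal{G}(P_S)$, so $W\subseteq C'$. The crucial local step is to show that $W$ is in fact all of $C'$ within distance $\varepsilon$ of $\mathcal{N}(F;P)$: on that whole $\varepsilon$-ball the summand faces $\mathcal{S}(P_i;\cdot)$ with $i\notin I_F$ are constantly equal to the vertex $F_i$ (their normal regions are full-dimensional and contain the node), so by uniqueness of the Minkowski decomposition of a face, two points of the ball lie in the same cell of $\mathcal{G}(P)$ if and only if they lie in the same cell of $\mathcal{G}(P_S)$; that is, the two Gaussian maps coincide near $\mathcal{N}(F;P)$. (Alternatively this can be read off Lemma~\ref{lem:var}.) Therefore $W = C'\cap B$ where $B$ is the open $\varepsilon$-ball about $\mathcal{N}(F;P)$, and hence $\mathcal{N}(F;P)$ is a local optimum of $C'$. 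By Lemma~\ref{lem:local} a local optimum of a cell of a Gaussian map is its western-most point, so $\mathcal{N}(F;P)$ is the western-most point of $C'$, and $W$ is precisely the western-most corner of $C'$ in $\mathcal{G}(P_S)$. Finally $C'$ has the same dimension as $C$ (they agree near $\mathcal{N}(F;P)$, or invoke Lemma~\ref{lem:var} directly), which is what the statement asks for.

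The ingredients I would lean on are: the earlier node lemma (to know when $\mathcal{N}(F;P)$ survives in $\mathcal{G}(P_S)$), Lemma~\ref{lem:var} (to transport local cell dimensions from $\mathcal{G}(P)$ to $\mathcal{G}(P_S)$), and Lemma~\ref{lem:local} (to upgrade ``local optimum'' to ``western-most point'' --- the step that uses the geodesic nature of meridians). The main obstacle is the local step in the ``if'' direction: checking that coarsening $\mathcal{G}(P)$ to $\mathcal{G}(P_S)$ does not \emph{enlarge} the cell around the node $\mathcal{N}(F;P)$, so that the $\varepsilon$-corner of $C$ is a \emph{complete} $\varepsilon$-corner of the single cell $C'$. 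This is exactly what $I_F\subseteq S$ buys, since the summands outside $I_F$ are already ``frozen'' near the node; once the two Gaussian maps are seen to be locally identical there, everything else is routine.
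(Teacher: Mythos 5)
Your proposal is correct and follows essentially the same route as the paper: the ``only if'' direction via the node lemma characterising when $\mathcal{N}(F;P)$ survives in $\mathcal{G}(P_S)$, and the ``if'' direction by showing the maps agree locally near the node (your explicit ``frozen summands'' argument is exactly the content of Lemma~\ref{lem:var}) and then upgrading the resulting local optimum to a western-most point via Lemma~\ref{lem:local}. No substantive difference from the paper's proof.
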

This is the most important lemma. It is the ultimate goal of the
definitions in this section, which is to have a witness of the
existence of a cell, a witness whose presence in the Gaussian maps of
partial sums depends on a simple rule.

However, according to the definitions so far, cells intersecting $U$
do not have a western-most corner. In any cell that intersects $U$, it
is possible to turn around $U$, always going west, much like the way
it is possible to turn around a pole on $S^2$. To deal with this
problem, we consider the restriction of the Gaussian map to $U$. Let
us denote as $S_U$ the intersection of $S^{d-1}$ with $U$. $S_U$ is a
sphere equivalent to $S^{d-3}$, and the restriction of a spherical
cell complex on $S^{d-1}$ to $S_U$ also defines a spherical cell
complex on $S_U$. In fact, the restriction to $S_U$ of the Gaussian
map on $S^{d-1}$ of a $d$-dimensional polytope is the Gaussian map on
$S_U$ of the orthogonal projection of the polytope onto $U$.% todo shorter

Since $S_U$ is a sphere equivalent to $S^{d-3}$, we can define west on
$S_U$ as we have done for $S^{d-1}$ (See Figure~\ref{fig:s3}). For
any cell of $\mathcal{G}(P)$ that intersects $S_U$, we define its
western-most corner as the western-most corner of its intersection
with $S_U$ in the restriction of $\mathcal{G}(P)$ to $S_U$. If $d>5$,
this again does not define a westernmost point for every cell, because
west is not defined on the intersection of $S_U$ with a subspace of
dimension $d-4$; so we restrict the Gaussian map to that subspace, and
start again recursively.

\begin{figure}
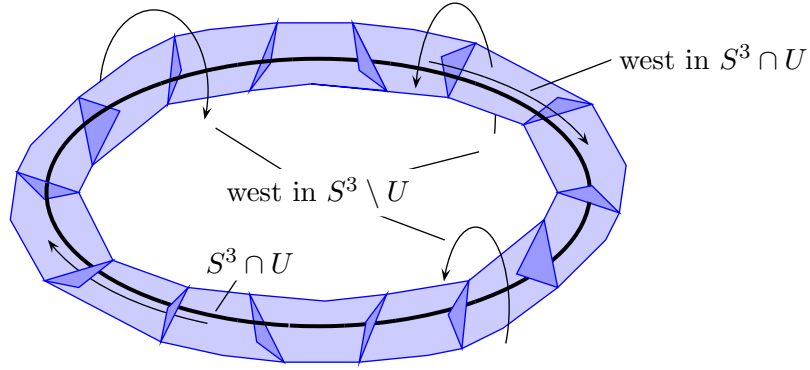

\begin{center}
\psset{unit=0.9cm,shortput=nab,linewidth=0.5pt,arrowsize=2pt 3}
\pspicture(-5,-2.5)(5,2.5)%
\definecolor{verylightblue}{rgb}{0.8,0.8,1}
\definecolor{lightblue}{rgb}{0.6,0.6,1}
\psellipticarc(2,1.2)(0.6,1.6){-30}{50}
\psellipticarc(-2.4,1.5)(0.8,1.2){140}{180}

\psset{fillstyle=solid,fillcolor=verylightblue,linecolor=blue}
\psline(0.5,2.5)(1.4,2.4)(2.3,2.2)(4,1.3)(4.5,0.4)(4.4,-0.3)(4.2,-0.7)(3.5,-1.4)(2.7,-2.0)(2.1,-2.3)(1.6,-2.4)(0.6,-2.5)(-0.5,-2.5)(-1.4,-2.4)(-2.3,-2.2)(-4,-1.3)(-4.5,-0.4)(-4.4,0.3)(-4.2,0.7)(-3.5,1.4)(-2.7,2.0)(-2.1,2.3)(-1.6,2.4)(-0.6,2.5)(0.5,2.5)(1.0,1.5)(-0.1,1.6)(-1,1.5)(-2.2,1.3)(-3.3,0.4)(-3.5,0)(-3.0,-1)(-1.9,-1.4)(-1.0,-1.5)(0.1,-1.6)(1,-1.5)(2.2,-1.3)(3.3,-0.4)(3.5,0)(3.0,1)(1.9,1.4)(-0.1,1.6)
\psset{fillstyle=solid,fillcolor=lightblue}
\psellipticarc[linewidth=1.5pt,fillstyle=none,linecolor=black](0,0)(4,2){61}{80}
\pspolygon(0.5,2.5)(1.0,1.5)(0.6,1.8)
\pspolygon(1.9,1.4)(2.3,2.2)(1.8,2.0)
\psellipticarc[linewidth=1.5pt,fillstyle=none,linecolor=black](0,0)(4,2){35}{61}
\psellipticarc[fillstyle=none,linecolor=black](0,0)(4.2,2.1){35}{50}
\pspolygon(3.0,1)(3.5,1.4)(4,1.3)
\psellipticarc[linewidth=1.5pt,fillstyle=none,linecolor=black](0,0)(4,2){-2}{35}
\psellipticarc[fillstyle=none,linecolor=black]{<-}(0,0)(4.2,2.1){10}{35}
\pspolygon(4.4,-0.3)(3.5,0)(4.0,0.1)
\psellipticarc[linewidth=1.5pt,fillstyle=none,linecolor=black](0,0)(4,2){-35}{-2}
\pspolygon(3.3,-0.4)(3.5,-1.4)(2.9,-1.2)
\psellipticarc[linewidth=1.5pt,fillstyle=none,linecolor=black](0,0)(4,2){-59}{-35}
\pspolygon(2.1,-2.3)(2.2,-1.3)(1.9,-1.8)
\psellipticarc[linewidth=1.5pt,fillstyle=none,linecolor=black](0,0)(4,2){-79}{-59}
\pspolygon(1,-1.5)(0.6,-2.5)(0.7,-1.9)
\psellipticarc[linewidth=1.5pt,fillstyle=none,linecolor=black](0,0)(4,2){-102}{-79}

\psellipticarc[linewidth=1.5pt,fillstyle=none,linecolor=black](0,0)(4,2){80}{101}
\pspolygon(-1,1.5)(-0.6,2.5)(-0.7,1.9)
\psellipticarc[linewidth=1.5pt,fillstyle=none,linecolor=black](0,0)(4,2){101}{122}
\pspolygon(-2.1,2.3)(-2.2,1.3)(-2.0,1.8)
\psellipticarc[linewidth=1.5pt,fillstyle=none,linecolor=black](0,0)(4,2){122}{145}
\pspolygon(-3.3,0.4)(-3.5,1.4)(-2.9,1.2)
\psellipticarc[linewidth=1.5pt,fillstyle=none,linecolor=black](0,0)(4,2){145}{178}
\pspolygon(-4.4,0.3)(-3.5,0)(-4.0,-0.1)
\psellipticarc[linewidth=1.5pt,fillstyle=none,linecolor=black](0,0)(4,2){178}{215}
\psellipticarc[fillstyle=none,linecolor=black]{<-}(0,0)(4.2,2.1){190}{215}
\pspolygon(-3.0,-1)(-3.5,-1.4)(-4,-1.3)
\psellipticarc[fillstyle=none,linecolor=black](0,0)(4.2,2.1){215}{230}

\psellipticarc[linewidth=1.5pt,fillstyle=none,linecolor=black](0,0)(4,2){-145}{-122}
\pspolygon(-0.5,-2.5)(-1.0,-1.5)(-0.9,-2.2)
\pspolygon(-1.9,-1.4)(-2.3,-2.2)(-2.2,-1.7)
\psellipticarc[linewidth=1.5pt,fillstyle=none,linecolor=black](0,0)(4,2){-122}{-102.5}
%(1.4,2.4)(2.3,2.2)(4,1.3)(4.5,0.4)(4.4,-0.3)(4.2,-0.7)(3.5,-1.4)(2.7,-2.0)(2.1,-2.3)(1.6,-2.4)(-0.5,-2.5)(-1.4,-2.4)(-2.3,-2.2)(-4,-1.3)(-4.5,-0.4)(-4.4,0.3)(-4.2,0.7)(-3.5,1.4)(-2.7,2.0)(-2.1,2.3)(-1.6,2.4)(0.5,2.5)(1.0,1.5)(-0.1,1.6)(-1,1.5)(-2.2,1.3)(-3.3,0.4)(-3.5,0)(-3.0,-1)(-1.9,-1.4)(-1.0,-1.5)(0.1,-1.6)(1,-1.5)(2.2,-1.3)(3.3,-0.4)(3.5,0)(3.0,1)(1.9,1.4)
%\psset{fillstyle=none,linecolor=blue,linestyle=solid}
%\psline(0.5,2.5)(1.0,1.5)(1.4,2.4)(1.9,1.4)(2.3,2.2)(3.0,1)(3.5,1.4)(4,1.3)(4.5,0.4)(4.4,-0.3)(3.5,0)(3.3,-0.4)(4.2,-0.7)(3.5,-1.4)(2.9,-1.2)(2.7,-2.0)(2.1,-2.3)(2.2,-1.3)(1.6,-2.4)(1,-1.5)(0.6,-2.5)(0.1,-1.6)(-0.5,-2.5)(-1.0,-1.5)(-1.4,-2.4)(-1.9,-1.4)(-2.3,-2.2)(-3.0,-1)(-3.5,-1.4)(-4,-1.3)(-4.5,-0.4)(-4.4,0.3)(-3.5,0)(-3.3,0.4)(-4.2,0.7)(-3.5,1.4)(-2.9,1.2)(-2.7,2.0)(-2.1,2.3)(-2.2,1.3)(-1.6,2.4)(-1,1.5)(-0.6,2.5)(-0.1,1.6)
% \psset{fillstyle=none,fillcolor=lightblue,linecolor=red}
% \psline(1.0,1.5)(1.9,1.4)(3.0,1)(3.5,0)(3.3,-0.4)(2.2,-1.3)(1,-1.5)(0.1,-1.6)(-1.0,-1.5)(-1.9,-1.4)(-3.0,-1)(-3.5,0)(-3.3,0.4)(-2.2,1.3)(-1,1.5)(-0.1,1.6)(1.0,1.5)
% \psline(0.5,2.5)(1.4,2.4)(2.3,2.2)(4,1.3)(4.5,0.4)(4.4,-0.3)(4.2,-0.7)(3.5,-1.4)(2.7,-2.0)(2.1,-2.3)(1.6,-2.4)(0.6,-2.5)(-0.5,-2.5)(-1.4,-2.4)(-2.3,-2.2)(-4,-1.3)(-4.5,-0.4)(-4.4,0.3)(-4.2,0.7)(-3.5,1.4)(-2.7,2.0)(-2.1,2.3)(-1.6,2.4)(-0.6,2.5)(0.5,2.5)
\psset{fillstyle=none,fillcolor=white,linecolor=black}
%\psellipse[linewidth=1.5pt](0,0)(4,2)

\psellipticarc{->}(2,1.2)(0.6,1.6){50}{150}
\psellipticarc{->}(2.3,-1.7)(0.5,1.2){-50}{140}
\psellipticarc{<-}(-2.4,1.5)(0.8,1.2){-30}{140}
\psline(-1,-1)(-1.5,-1.75)
\rput*(-1,-1){$S^3\cap U$}
\psline(4.4,2)(3.1,1.5)
\rput*[l](4.4,2){west in $S^3\cap U$}
\psline(0,0)(-1.5,0.95)
\psline(0,0)(1.85,-0.7)
\psline(0,0)(2.35,0.6)
\rput*(0,0){west in $S^3\setminus U$}
\endpspicture%
\end{center}
\caption{Representation of a map in $S^3$ by stereographic projection
  in Euclidean space. West defined in $S^3\setminus U$ is turning
  around the intersection of $S^3$ with a $2$-dimensional subspace
  $U$.  Cells which intersect the subspace have their western-most
  corner defined by a different direction west defined on the
  intersection, which is equivalent to $S^1$.}\label{fig:s3}
\end{figure}

We present now the complete construction. We have chosen a subspace
$U$ of dimension $d-2$ such that its intersection with any
two-dimensional plane containing an edge of $\mathcal{G}(P)$ is just
the origin. Let us write $U^{d-2}=U$, and denote as $G^{d-2}$ the
restriction of the Gaussian map $\mathcal{G}(P)$ to
$U^{d-2}$. $G^{d-2}$ is a spherical cell complex on $S^{d-3}$. Then,
for any $i$ larger than $2$, we define from $U^i$ and $G^i$ a subspace
$U^{i-2}$, which is a subspace of $U^i$, such that the intersection of
$U^{i-2}$ with any two-dimensional plane containing an edge of $G^i$
is just the origin. We then define $G^{i-2}$ as the restriction of
$G^i$ to $U^{i-2}$, which is a spherical cell complex on
$S^{i-3}$. This defines a sequence of subspaces $U^{d-2}\supset
U^{d-4}\supset\cdots$ and a sequence of spherical cell complexes
$G^{d-2}\supset G^{d-4}\supset\cdots$. If $d$ is even, the sequences
end with $G^2$, which is a spherical cell complex on $S^1$, and $U^0$
is just the origin and does not intersect $S^1$. If $d$ is odd, they
end with $G^3$, a spherical cell complex on $S^2$, and $U^1$ is a
one-dimensional subspace whose intersection with $S^2$ defines two
antipodal points that we called north and south pole in
Section~\ref{sec:3d}.

For each $G^i$, spherical cell complex of $S^{i-1}$, we can define a
direction west for every point of $S^{i-1}$ that is not on $U^{i-2}$,
as we have done for $\mathcal{G}(P)$ on $S^{d-1}$. Then for any cell
$C$ of $\mathcal{G}(P)$, let $i$ be the smallest number such that the
intersection of $C$ with $U^i$ is nonempty. We then define the
western-most point of $C$ to be the western-most point of $C\cap U^i$,
cell of $G^i$ on the sphere $S^{i-1}$. The western-most corner of $C$
is also the western-most corner of $C\cap U^i$.

Note that if a cell $C$ is of dimension $d-k-1$, i.e. it is the normal
region in $\mathcal{G}(P)$ of a $k$-dimensional face, it does not
intersect $U^i$ for any $i\geq k$, because $U^i$ was chosen so as not
to intersect edges and nodes of $G^{i+2}$, which are restrictions to
$G^{i+2}$ of cells of dimension $d-i-1$ and $d-i-2$ in
$\mathcal{G}(P)$. For instance, if $d$ is odd, only normal region of
vertices may intersect with $U^1$. Since $U^1$ only intersects
$S^{d-1}$ in two antipodal points, there are exactly two cells of
dimension $d-1$ in any Gaussian map that intersect $U^1$. These are
the only two cells that do not have a western-most corner. When $d$ is
even, west is defined on every point of $G^2$, spherical cell complex
on $S^1$, and so every cell of a Gaussian map has a western-most
corner.

We have now defined a western-most corner for every cell of Gaussian
maps, with the exception, if $d$ is odd, of the two cells that contain
a pole. As before for cells that do not intersect $U$, the
western-most corner of a cell of $\mathcal{G}(P)$ is also a
western-most corner of a cell of the same dimension in the Gaussian
map of a partial sum $\mathcal{G}(P_S)$ if and only if $S$ contains
the support of its western-most point, or rather the support of the
cell whose restriction is its western-most point. The cardinality of
the support is always less than $d$.

Now that we have a complete definition of western-most corners, all
that remains is to count them. The support of any face of $P$ has
cardinality less than $d$, so all western-most corners of cells of
$\mathcal{G}(P)$ can be found in the Gaussian map of partial sums of
at most $d-1$ summands. It is not difficult to see that for any $j\geq
|I_F|$, there are ${{r-|I_F|} \choose {j-|I_F|}}$ subsets of
$\{1,\ldots,r\}$ of cardinality $j$ that contain $I_F$.

The formula of the main theorem was found by observing low-dimensional
cases. It is based on the following combinatorial equivalence:
\begin{lemma}\label{lem:for}
For any $1\leq s<d\leq r$,
$$
\sum_{j=1}^{d-1}(-1)^{d-1-j} {{r-1-j} \choose {d-1-j}} {{r-s} \choose
  {j-s}} = 1.
$$
\end{lemma}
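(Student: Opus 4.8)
The plan is to kill the alternating sign by a single application of upper negation, and then recognize what is left as a Vandermonde convolution.

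First I would shrink the range of summation. Since $\binom{r-s}{j-s}=0$ whenever $j<s$, only the indices $s\le j\le d-1$ contribute, and this range is nonempty precisely because $s<d$. For every such $j$ we have $d-1-j\ge 0$, so upper negation applies to the first factor:
$$\binom{r-1-j}{d-1-j}=(-1)^{d-1-j}\binom{(d-1-j)-(r-1-j)-1}{d-1-j}=(-1)^{d-1-j}\binom{d-1-r}{d-1-j}.$$
Substituting this in, the two factors $(-1)^{d-1-j}$ cancel and the whole sum collapses to $\sum_{j=s}^{d-1}\binom{d-1-r}{d-1-j}\binom{r-s}{j-s}$.

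Next I would reindex by $l=j-s$, so $l$ runs over $0\le l\le d-1-s$, turning the sum into $\sum_{l}\binom{d-1-r}{(d-1-s)-l}\binom{r-s}{l}$. The terms that a full Vandermonde convolution would include but that we have dropped all vanish: for $l<0$ the factor $\binom{r-s}{l}$ is zero, and for $l>d-1-s$ the factor $\binom{d-1-r}{(d-1-s)-l}$ is zero because its lower index is negative. Hence Vandermonde's identity $\sum_{l}\binom{a}{m-l}\binom{b}{l}=\binom{a+b}{m}$, applied with $a=d-1-r$, $b=r-s$ and $m=d-1-s$, gives $\binom{(d-1-r)+(r-s)}{d-1-s}=\binom{d-1-s}{d-1-s}=1$, which is the claim.

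The steps are all routine; the only points needing care are bookkeeping ones: verifying $d-1-j\ge 0$ on the truncated range so that upper negation is legitimate, using $r\ge d$ to know that $d-1-r$ is a (negative) integer so the generalized binomial coefficients behave as expected, and checking that the truncated sum already contains every nonzero term of the Vandermonde convolution (so that the classical identity applies verbatim). An alternative, avoiding negative upper indices altogether, is a short induction on $d-1-s$ using Pascal's rule on $\binom{r-1-j}{d-1-j}$, but the upper-negation route above is shorter and is the one I would write out.
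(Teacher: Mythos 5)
Your proof is correct, and it takes a genuinely different route from the paper's. You evaluate the sum in closed form: after discarding the vanishing terms with $j<s$, you cancel the sign by upper negation, $\binom{r-1-j}{d-1-j}=(-1)^{d-1-j}\binom{d-1-r}{d-1-j}$, and recognize what remains as a Chu--Vandermonde convolution with upper arguments $d-1-r$ and $r-s$, collapsing to $\binom{d-1-s}{d-1-s}=1$; the side conditions you flag (that $d-1-j\ge 0$ on the truncated range, that the dropped terms of the full convolution vanish, and that Vandermonde remains valid for a negative upper argument since it is a polynomial identity in the upper indices) all hold, so the argument is complete. The paper instead proceeds by induction on $r$: the base case $r=d$ follows from the alternating sum $\sum_{j}(-1)^{d-1-j}\binom{d-s}{j-s}=1$, and the inductive step splits $\binom{r-j}{d-1-j}$ by Pascal's rule and shifts the summation index, turning the identity for $r$ into the identity for $r+1$. (The paper does remark that the lemma can be derived from a standard family of identities, citing Graham--Knuth--Patashnik, Gr\"unbaum and Ziegler, but its written proof is the induction.) Your approach buys a direct, non-inductive evaluation that explains structurally why the answer is exactly $1$ --- the sum is Vandermonde in disguise --- while the paper's induction stays entirely within ordinary binomial coefficients with nonnegative upper index and uses nothing beyond Pascal's rule, at the cost of a somewhat less transparent mechanism.
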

The proof is in Appendix~\ref{app:for}.

By this Lemma, if we count all the western-most corners in partial
sums of $j$ polytopes, multiply by $(-1)^{d-1-j}{r-1-j \choose
  d-1-j}$, and sum over $j$, we end up counting exactly once each
western-most corner, no matter what is the cardinality of the relevant
support. Therefore, if $w_k(.)$ is the number of western-most corners
of $k$-dimensional cells in a Gaussian map,
$$
w_k(\mathcal{G}(P)) = \sum_{j=1}^{d-1}(-1)^{d-1-j} {{r-1-j} \choose
  {d-1-j}} \sum_{S\in \mathcal{C}_j^r}w_k(\mathcal{G}(P_S)),\quad k=0,\ldots,d-1
$$
where $\mathcal{C}_j^r$ is the family of subsets of $\{1,\ldots,r\}$
of cardinality $j$. Since there is one western-most corner of a
$k$-dimensional cell for each $d-1-k$ face of the underlying polytope,
this proves Theorem~\ref{thm:main} for any $d$ and $k$. The only
exception is that if $d$ is odd, any Gaussian map has two regions of
dimension $d-1$ that contain the poles, and that have no western-most
corner, and so in that case, $w_{d-1}(\mathcal{G}(\mathcal{P}))=
f_0(\mathcal{P})-2$. This gives the special case of the theorem for
$d$ odd and $k=0$.

In order to prove Corollary~\ref{cor}, it is enough to point out that
each western-most corner of cells of $\mathcal{G}(P)$ can be found at
least once (and often a lot more) in the Gaussian map of partial sums
of $d-1$ summands. And so, the last term of the sum in
Theorem~\ref{thm:main} is an upper bound on the number of faces.

% In fact, it is also possible to show that the two last term of the sum
% are a lower bound on the number of faces:
% $$
% f_k(P_1+\cdots+P_r)\geq \sum_{S\in
%   \mathcal{C}_{d-1}^r}f_k(P_S)-(r-d+1)\sum_{S\in
%   \mathcal{C}_{d-2}^r}f_k(P_S).
% $$

\section{Maximum number of vertices}\label{sec:max}
Using Corollary~\ref{cor}, we show bounds on the number of vertices of
Minkowski sums. The trivial bound tells us that if $r<d$, then
$f_0(P_1+\cdots+P_r)\leq \prod_{i=1}^rf_0(P_i)$. Consequently, if
$r\geq d$, we get by Corollary~\ref{cor} that
$$
f_0(P_1+\cdots+P_r) \leq \sum_{S\in \mathcal{C}_{d-1}^r}\prod_{i\in S}f_0(P_i).
$$
This can be seen as enumerating all possible combinations of $d-1$
vertices chosen each from a different summand. This is necessarily
lower than all possible combinations of $d-1$ vertices from the
summands. If the summands have $n$ vertices in total, this upper bound
is ${n \choose d-1}$, which is in $O(n^{d-1})$. If each summands has
$n$ vertices, then we have:
$$
f_0(P_1+\cdots+P_r) \leq \sum_{S\in
  \mathcal{C}_{d-1}^r}n^{d-1}={r \choose d-1}n^{d-1},
$$
which is in $O(r^{d-1}n^{d-1})$. The previous known bound was in
$O(r^{d-1}n^{2(d-1)})$~\cite{Gritzmann93}.

Note that a construction from~\cite{Fukuda07} allows us to choose
$d-1$ polytopes of $n$ vertices each such that the sum has $n^{d-1}$
vertices. It is easy to adapt this construction to choose $r$
polytopes such that any partial sum of $d-1$ summands has this many
vertices, which by Theorem~\ref{thm:main} means that the total sum has
exactly (for $d$ even) $\sum_{j=1}^{d-1}(-1)^{d-1-j}{r-1-j\choose
  d-1-j}{r\choose j}n^j$ vertices.

Unfortunately, except in three dimensions, the maximum number of
facets of a Minkowski sum of polytopes remains open, even for two
summands in four dimensions, so we cannot write an upper bound for facets.
We can however tell that if the number of facets in the sum of $d-1$ polytopes
is in $O(p(n))$, their number in the sum of $r\geq d$ polytopes is in
$O(r^{d-1}p(n))$. Finding $p(n)$ should be the object of further research.

\section{Summary}\label{sec:con}
We have extended the intuitive concept of west from three dimensions
to higher dimensions. Thanks to the properties of the concept, we were
able to prove a relation on the number of vertices in sums of many
polytopes, and show that this number has a comparatively low order
of complexity.  For faces of higher dimensions, the result also shows
that the complexity of Minkowski sums of many polytopes is not much
more complex than that of $d-1$ summands.

\appendix
\section{Appendix}
\subsection{Proof of Lemma~\ref{lem:sense} and \ref{lem:local}}\label{app:ls}

We prove in this appendix that the subspace $U$ plays the same role as
poles in $3$ dimensions, and that our definition of west has the
property that if we ``optimize'' in direction west over a spherically
convex subset of $S^{d-1}$, a local optimum of the subset is also a
global optimum. We start with a few lemmas:
\begin{lemma}\label{lem:subsp}
  Let $p$ and $p'$ be distinct non-antipodal points of $S^{d-1}$.
  Suppose $p$ and $p'$ are on a same subspace. Then all points on the
  great circle of $S^{d-1}$ containing $p$ and $p'$ are on that subspace.
  \begin{proof}
    A great circle of $S^{d-1}$ is the intersection of $S^{d-1}$ with
    a $2$-dimensional space. Suppose a great circle contains $p$, $p'$
    and $q$, with $p$ and $p'$ on a subspace $L$, but $q\not\in L$.
    Then the intersection of the $2$-dimensional space containing the
    great circle with $L$ is $1$-dimensional, and so it is a line
    going through the origin. Since $p$ and $p'$ are both on that line
    and in $S^{d-1}$, they are either the same or antipodal.
  \end{proof}
\end{lemma}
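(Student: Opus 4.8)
The plan is to reduce the statement to a single observation: a great circle of $S^{d-1}$ is cut out by exactly one $2$-dimensional linear subspace of $\RRd$, and that subspace is forced by any two linearly independent points lying on the circle.

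First I would record the relevant definition, exactly as used in the text: a great circle of $S^{d-1}$ is the set $S^{d-1}\cap V$ for a $2$-dimensional linear subspace $V\subseteq\RRd$. Next I would check that $p$ and $p'$ are linearly independent as vectors of $\RRd$: if one were a scalar multiple of the other, then, both being unit vectors, the scalar would be $\pm 1$, so $p$ and $p'$ would be equal or antipodal, contrary to hypothesis. Consequently the $2$-plane $V$ whose intersection with $S^{d-1}$ is the great circle through $p$ and $p'$ must be $V=span(\{p,p'\})$.

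Then the conclusion is one line. Let $L$ be the linear subspace containing $p$ and $p'$. Being a linear subspace, $L$ contains every linear combination of $p$ and $p'$, so $span(\{p,p'\})=V\subseteq L$, and hence every point of the great circle, i.e. of $S^{d-1}\cap V$, lies on $L$. (Equivalently, one may argue by contradiction: a point $q$ of the circle with $q\notin L$ forces $V\not\subseteq L$, so $V\cap L$ is a proper subspace of the $2$-plane $V$, hence of dimension at most $1$; a line through the origin --- or $\{\bvec{0}\}$ --- meets $S^{d-1}$ in at most two antipodal points, so $p,p'\in V\cap L\cap S^{d-1}$ would be equal or antipodal, a contradiction.)

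I do not expect a genuine obstacle here; the only steps needing a moment's care are the two elementary facts used above, namely that distinct non-antipodal unit vectors are linearly independent and that a line through the origin meets $S^{d-1}$ in exactly two antipodal points. The role of the lemma downstream is to let us pass from ``$p$ and $p'$ lie on a subspace'' to ``the whole great circle through $p$ and $p'$ lies on that subspace'', which is precisely what is needed, in the proofs of Lemmas~\ref{lem:sense} and~\ref{lem:local}, to understand how arcs of great circles --- the edges of Gaussian maps --- can meet the chosen subspaces $U$ and $U^i$.
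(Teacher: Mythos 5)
Your proof is correct and follows essentially the same route as the paper: the parenthetical contradiction argument is exactly the paper's proof (the $2$-plane meets $L$ in a subspace of dimension at most one, which cannot contain two distinct non-antipodal unit vectors), and your direct version via $V=span(\{p,p'\})\subseteq L$ is just its contrapositive. No gaps.
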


For any $\theta$, let us denote as $L(\theta)$ the subspace
orthogonal to $\sin(\theta)\bvec{e_1}+ \cos(\theta)\bvec{e_2}$,
i.e. the set $\{p\;:\;\langle
p,\sin(\theta)\bvec{e_1}+\cos(\theta)\bvec{e_2}\rangle=0\}$.
\begin{lemma}\label{lem:orth}
  A point $p$ of $S^{d-1}$ is in $L(\theta)\cap S^{d-1}$ if and only
  if $\cos(\theta_1(p)-\theta))=0$ or $p\in U$.
\begin{proof}
  Any $p$ in $S^{d-1}$ is written in our parametrization as $\rho
  s+u$, with $\rho\geq 0$, $s\in S^1$ and $u\in U$. We can write
  $s=\sin(\theta_1(p))\bvec{e_1}+\cos(\theta_1(p))\bvec{e_2}$, and
  $\rho=0$ if and only if $p\in U$. Therefore, $\langle
  p,\sin(\theta)\bvec{e_1}+\cos(\theta)\bvec{e_2}\rangle=
  \rho(\sin(\theta)\sin(\theta_1(p))+\cos(\theta)\cos(\theta_1(p)))=
  \rho\cos(\theta_1(p)-\theta)$. So $p$ is in $L$ if and only if $\rho
  \cos(\theta_1(p)-\theta)=0$, which is if and only if $\rho=0$ or
  $\cos(\theta_1(p)-\theta)=0$. The result follows.
\end{proof}
\end{lemma}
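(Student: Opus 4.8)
\textbf{Proof proposal for Lemma~\ref{lem:orth}.}

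The plan is to reduce the membership condition $p\in L(\theta)\cap S^{d-1}$ to a one‑dimensional trigonometric identity by working entirely inside the recursive parametrization of $S^{d-1}$ set up above. First I would recall that every point $p\in S^{d-1}$ can be written as $p=\rho\, s+u$ with $\rho\ge 0$, $s=\sin(\theta_1(p))\bvec{e_1}+\cos(\theta_1(p))\bvec{e_2}\in S^1$, $u\in U=span(\{\bvec{e_3},\ldots,\bvec{e_d}\})$, and $\rho^2+\|u\|^2=1$; moreover $\rho=0$ exactly when $p\in U$. The one structural fact that makes the argument work is that $U$ is spanned by $\bvec{e_3},\ldots,\bvec{e_d}$, so $u$ is orthogonal to both $\bvec{e_1}$ and $\bvec{e_2}$, hence to the defining normal vector $\sin(\theta)\bvec{e_1}+\cos(\theta)\bvec{e_2}$ of the hyperplane $L(\theta)$. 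I would state this orthogonality explicitly before computing, since it is what kills the $u$‑term.

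Given that, the computation is a two‑liner: since $\langle u,\sin\theta\,\bvec{e_1}+\cos\theta\,\bvec{e_2}\rangle=0$,
$$\langle p,\sin\theta\,\bvec{e_1}+\cos\theta\,\bvec{e_2}\rangle=\rho\,\langle s,\sin\theta\,\bvec{e_1}+\cos\theta\,\bvec{e_2}\rangle=\rho\bigl(\sin\theta\sin\theta_1(p)+\cos\theta\cos\theta_1(p)\bigr)=\rho\cos(\theta_1(p)-\theta),$$
using the angle‑difference identity. This product vanishes if and only if $\rho=0$ or $\cos(\theta_1(p)-\theta)=0$, i.e. if and only if $p\in U$ or $\cos(\theta_1(p)-\theta)=0$, which is exactly the asserted equivalence. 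No case analysis is needed because the conclusion is stated disjunctively.

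There is no substantive obstacle here; the only point requiring a little care is that $\theta_1(p)$ is genuinely defined only when $\rho>0$ (that is, off $U$), but the degenerate case $p\in U$ is absorbed by the first alternative of the statement, so the disjunction is well posed. It is worth noting \emph{why} this lemma is wanted: it identifies the level sets of the ``west'' coordinate, $\{q:\theta_1(q)\equiv\theta+\tfrac\pi2\}$, with the great hypercircles $L(\theta)\cap S^{d-1}$ away from $U$ — this is the geodesic property of meridians that Lemma~\ref{lem:local} will exploit to upgrade local optima to global ones, and it mirrors the role of $U$ as the higher‑dimensional analogue of the poles used in Section~\ref{sec:3d}.
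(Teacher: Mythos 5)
Your proposal is correct and follows essentially the same route as the paper's own proof: write $p=\rho s+u$ with $s\in S^1$, $u\in U$, compute $\langle p,\sin\theta\,\bvec{e_1}+\cos\theta\,\bvec{e_2}\rangle=\rho\cos(\theta_1(p)-\theta)$, and read off the disjunction from the vanishing of the product. Your explicit remarks that $u\perp\bvec{e_1},\bvec{e_2}$ and that $\theta_1(p)$ is only defined off $U$ are minor clarifications of steps the paper leaves implicit, not a different argument.
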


\begin{lemma}\label{lem:two}
  Let $K$ be great circle of $S^{d-1}$. Then either $K$ is inside $U$;
  or $K$ intersects $U$, $K\setminus U$ has two connected components
  $K_1$ and $K_2$ such that for any two points $p\in K_1$, $p'\in
  K_2$, $\theta_1(p)+\pi=\theta_1(p')$; or $K$ does not intersect $U$,
  and for any two distinct points $p$, $p'$ in $K$, $\theta_1(p)\ne
  \theta_1(p')$, and $\theta_1(p)+\pi=\theta_1(p')$ if and only if
  $p$ and $p'$ are antipodal.
\begin{proof}
  Suppose $p$ and $p'$ distinct in $K$ such that
  $\theta_1(p)=\theta_1(p')$; or suppose $p$ and $p'$ non-antipodal in
  $K$ such that $\theta_1(p)+\pi=\theta_1(p')$; or suppose $p'$ is in
  $K\cap U$ and any $p$ in $K$. In all three cases, by
  Lemma~\ref{lem:orth}, $p$ and $p'$ are both in the subspace
  $L(\theta_1(p)+\pi/2)$. By Lemma~\ref{lem:subsp}, all points on $K$
  are in $L(\theta_1(p)+\pi/2)$. Then by Lemma~\ref{lem:orth} again,
  for any $q$ on the arc of great circle, $q\in U$ or
  $\cos(\theta_1(q)-(\theta_1(p)+\pi/2))=0$. Suppose $K\cap U$ contains
  more than two points. Then some of them are distinct and
  non-antipodal, and by Lemma~\ref{lem:subsp}, $K$ in inside $U$.
  Otherwise, for any $q$ and $q'$ antipodal on $K\setminus U$,
  $\theta_1(q')=\theta_1(q)\pm \pi$. So there must be two antipodal
  points of $K$ inside $U$ separating $q$ and $q'$, and so $K\setminus
  U$ has two connected components.

  The only remaining case is that for any $p$ and $p'$ distinct in
  $K$, $\theta_1(p)\ne\theta_1(p')$; for any $p$ and $p'$ in $K$ such
  that $\theta_1(p)+\pi=\theta_1(p')$, $p$ and $p'$ must be antipodal; and
  $K\cap U$ is empty.
\end{proof}
\end{lemma}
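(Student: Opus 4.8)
\medskip
\noindent\emph{Proof proposal.} The plan is to isolate the ``degenerate'' configurations and show that each forces one of the first two alternatives, leaving the third as the logical complement. Suppose $K$ is not in the situation of the third alternative; then at least one of the following holds: (a) there are distinct $p,p'\in K$ with $\theta_1(p)=\theta_1(p')$; (b) there are non-antipodal $p,p'\in K$ with $\theta_1(p)+\pi=\theta_1(p')$; or (c) $K\cap U\neq\emptyset$. In each case I would produce a pair of distinct, non-antipodal points of $K$ lying in a common hyperplane of the form $L(\alpha)$. For (a) and (b) this is immediate from Lemma~\ref{lem:orth}: the two values of $\theta_1$ solving $\cos(\psi-\alpha)=0$, namely $\alpha\pm\pi/2$, differ by $\pi$, so both $p$ and $p'$ lie in $L(\theta_1(p)+\pi/2)$. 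For (c), if $K\subseteq U$ we are already in the first alternative; otherwise pick $p'\in K\cap U$ and $p\in K\setminus U$ — these are automatically distinct and non-antipodal (since $-p\notin U\ni p'$), and $p\in L(\theta_1(p)+\pi/2)$ while $p'\in U\subseteq L(\theta)$ for every $\theta$.

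Having fixed such a pair in $L(\alpha)$, the next step is to push the conclusion to all of $K$: by Lemma~\ref{lem:subsp}, since $p$ and $p'$ are distinct, non-antipodal, and both lie in the subspace $L(\alpha)$, the entire great circle $K$ lies in $L(\alpha)$. Then Lemma~\ref{lem:orth} applies to every $q\in K$: either $q\in U$, or $\theta_1(q)\in\{\alpha+\pi/2,\alpha-\pi/2\}$, two values differing by $\pi$.

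Now I would split on $K\cap U$. If $K\cap U$ contains two distinct non-antipodal points, Lemma~\ref{lem:subsp} again gives $K\subseteq U$: the first alternative. Otherwise $K\cap U$ is centrally symmetric (intersection of a subspace with a great circle) and contains no two distinct non-antipodal points, hence is empty or a single antipodal pair $\{x,-x\}$. The empty case cannot occur here: $\theta_1$ is continuous on the connected set $K$ (which would then avoid $U$ entirely) and takes values in the two-point discrete set $\{\alpha\pm\pi/2\}$, so it would be constant, contradicting $\theta_1(-q)=\theta_1(q)+\pi$. Hence $K\cap U=\{x,-x\}$, so $K\setminus U$ is two open arcs $K_1,K_2$, interchanged by the antipodal map; $\theta_1$ is constant on each arc (same continuity-plus-discreteness argument), and the two constants differ by $\pi$ because any two antipodal points, one in each arc, have $\pi$-shifted $\theta_1$. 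This is the second alternative. Finally, if none of (a), (b), (c) held, then $K\cap U=\emptyset$, $\theta_1|_K$ is injective, and $\theta_1(p')=\theta_1(p)+\pi$ with $p\neq p'$ forces $p,p'$ antipodal (the converse being automatic from the parametrization) — exactly the third alternative.

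The routine-but-fiddly parts are the case-(c) choice of $p$ in the first step and the small topological observation (a continuous map from a connected circle to a two-point set is constant) that rules out the spurious $K\cap U=\emptyset$ branch in the third step; everything else is a direct application of Lemmas~\ref{lem:subsp} and~\ref{lem:orth}. I do not expect any genuinely hard step, only careful bookkeeping of which of the three alternatives each sub-case lands in.
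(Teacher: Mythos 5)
Your proposal is correct and takes essentially the same route as the paper: you use Lemma~\ref{lem:orth} and Lemma~\ref{lem:subsp} to show that each degenerate configuration forces $K\subseteq L(\theta_1(p)+\pi/2)$, and then run the same case analysis on $K\cap U$ to land in the first or second alternative, with the third as the logical complement. You are merely a bit more explicit than the paper about the non-antipodality checks and the continuity-plus-discreteness argument ruling out $K\cap U=\emptyset$, which are exactly the steps the published proof leaves implicit.
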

Note that if a great circle of $S^{d-1}$ does not intersect $U$, then
$\theta_1$ is different in any two points of the great circle. Since
the parametrization is smooth on $S^{d-1}\setminus U$, $\theta_1$
augments monotonically and continuously in one direction around the
great circle, except in one point when it drops from $2\pi$ to $0$.

Let us recall Lemma~\ref{lem:sense} before proving it:
\paragraph{Lemma~\ref{lem:sense}}
{\em
  If a spherically convex subset $C$ of $S^{d-1}$ does
  not intersect $U$, it is in a hemisphere defined by $\theta_1\in
  [\alpha,\alpha+\pi]$ or $\theta_1\in [0,\alpha]\cup[\alpha+\pi,2\pi)]$ for
  some $\alpha\in[0,2\pi)$.
}
\begin{proof}
  Let $T_1(C)$ be the set of values of $\theta_1$ over $C$ in our
  parametrization. Since $C$ does not intersect $U$, $T_1(C)$ is
  connected. Suppose $T_1(C)$ is $[0;2\pi)$, then there are two points
  $p$, $p'$ in $C$ with $\theta_1(p)+\pi=\theta_1(p')$. Because $C$ is
  spherically convex, any shortest arc of great circle between $p$ and
  $p'$ is contained in $C$. If $p$ and $p'$ are antipodal, then $C$ is
  the whole sphere and intersects $U$. Otherwise, by
  Lemma~\ref{lem:two}, the arc of great circle again contains a point
  in $U$. This is a contradiction.
  
  Otherwise, suppose without loss of generality that the supremum
  of $T_1(C)$ is $3\pi/2$. Then either $C$ is in the hemisphere
  defined by $\theta_1\in[\pi/2,3\pi/2]$, or there is a $\delta>0$
  such that there are two points $p$, $p'$ in $C$ with
  $\theta_1(p)+\pi=\theta_1(p')=3/2-\delta$. As above, this implies
  that $C$ contains a point in $U$, which is a contradiction.
\end{proof}

Let us recall Lemma~\ref{lem:local} before proving it:
\paragraph{Lemma~\ref{lem:local}}
{\em
  A point $p$ is a local optimum of a cell $C$ of a Gaussian map $G$
  if and only if it is a western-most point of $C$.
}
\begin{proof}
  By definition, a western-most point is always a local
  optimum. Assume $p$ is a local optimum of $C$, and that some
  distinct $p'$ is the western-most point of $C$, and therefore also a
  local optimum. Then the shortest arc of great circle between $p$ and
  $p'$ is in $C$. Let $\alpha=\theta_1(p')-\theta_1(p)$.  If
  $\cos(\alpha)\ne 0$, then by Lemma~\ref{lem:two}, the great circle
  defined by $p$ and $p'$ does not intersect $U$, and $\theta_1$
  augments continuously from $p$ to $p'$ except possibly in one point
  when it jumps from $2\pi$ to $0$. So there is a $q$ in the
  intersection of the arc of great circle from $p$ to $p'$ with the
  open set that proves $p$ is a local optimum. For $q$ close enough,
  $\theta_1(q)>\theta_1(p)$, and so $p$ is not a local optimum, a
  contradiction.

  Suppose now $\alpha=\pi$. If $p$ and $p'$ are antipodal, then any
  great circle containing $p$ and $p'$ is in $C$, and $C$ is the whole
  sphere, a contradiction. If $p$ and $p'$ are not antipodal, then by
  Lemma~\ref{lem:two}, there is a point in the arc of great circle
  from $p$ to $p'$ that is in $U$, and so $C$ intersects $U$. But this
  means $C$ has no western-most point, a contradiction.

  Suppose now $\alpha=0$. Then $p$ and $p'$ are incident to a cell
  where $\theta_1$ is fixed. But this means that the great circles
  containing edges of the cell intersect $U$, which contradicts the
  way we have chosen $U$.

  Therefore, it is impossible to have a local optimum $p$ and
  a distinct western-most point $p'$ of a same cell.
\end{proof}

\subsection{Proof of Lemma~\ref{lem:for}}\label{app:for}
% check grunbaum conv pol 67 p. 149, graham knuth patashnik, concrete math 89

We prove here the combinatorial equivalence used for the formulation
of Theorem~\ref{thm:main}. The relation can also be derived from a
protean family of equivalences of type
$$
\sum_{j=0}^c(-1)^{c-j}{a+j\choose b+c}{c\choose j} = {a\choose b},\quad b<a,\;c\geq 0.
$$
See also~\cite[p. 169]{Graham94}, \cite[p. 149]{Grunbaum67}, \cite[p. 285]{Ziegler95} on this subject.

Let us recall Lemma~\ref{lem:for} before proving it:
\paragraph{Lemma~\ref{lem:for}}
{\em For any $1\leq s<d\leq r$,
$$
\sum_{j=1}^{d-1}(-1)^{d-1-j} {{r-1-j} \choose {d-1-j}} {{r-s} \choose
  {j-s}} = 1.
$$
}

\begin{proof}
We prove the Lemma by induction over $r$. We know that for any $d$,
$\sum_{j=0}^d(-1)^j{d \choose j}=0$. We can also write
$\sum_{j=0}^d(-1)^j{d-s \choose j-s}=0$, for any $1\leq s<d$, and so we have
$\sum_{j=0}^{d-1}(-1)^{d-1-j}{d-s \choose j-s}=1$. We can also write
$$
\sum_{j=0}^{d-1}(-1)^{d-1-j}{d-1-j \choose d-1-j}{d-s \choose j-s}=1.
$$
This proves the relation for $r=d$. Assume the relation is proved for
$r$. Then
$$
\sum_{j=0}^{d-1}(-1)^{d-1-j}{r-1-j \choose d-1-j}{r-s \choose j-s}=1,
$$
$$
=\sum_{j=0}^{d-1}(-1)^{d-1-j}{r-j \choose
  d-1-j}{r-s \choose j-s}- \sum_{j=0}^{d-1}(-1)^{d-1-j}{r-1-j
  \choose d-2-j}{r-s \choose j-s}.
$$
Replacing $j$ in the second sum with $j'-1$, we get
$$
\sum_{j=0}^{d-1}(-1)^{d-1-j}{r-j \choose
  d-1-j}{r-s \choose j-s}- \sum_{j'=1}^{d}(-1)^{d-j'}{r-j' \choose
  d-1-j'}{r-s \choose j'-1-s}=1.
$$
In the second sum, the term $j'=d$ gives zero, so we can remove it and add one
for $j'=0$, which also gives zero.
$$
\sum_{j=0}^{d-1}(-1)^{d-1-j}{r-j \choose d-1-j}{r-s \choose j-s}-
\sum_{j'=0}^{d-1}(-1)^{d-j'}{r-j' \choose d-1-j'}{r-s \choose
  j'-1-s}=1.
$$
Grouping the sums, we get
$$
\sum_{j=0}^{d-1}(-1)^{d-1-j}{(r+1)-1-j \choose d-1-j}{(r+1)-s
  \choose j-s}=1,
$$
and so the relation is true for $r+1$, which proves
Lemma~\ref{lem:for} by induction.
\end{proof}


\begin{thebibliography}{10}

\bibitem{Fogel09}
E.~Fogel, D.~Halperin, and C.~Weibel.
\newblock On the exact maximum complexity of minkowski sums of polytopes.
\newblock {\em Discrete and Computational Geometry}, 42(4):654--669, 2009.

\bibitem{Fukuda07}
K.~Fukuda and C.~Weibel.
\newblock On f-vectors of {M}inkowski additions of convex polytopes.
\newblock {\em Discrete and {C}omputational {G}eometry}, 37:503--516, 2007.

\bibitem{Fukuda08}
K.~Fukuda and C.~Weibel.
\newblock Minkowski sums of polytopes relatively in general position.
\newblock Accepted by {\em European {J}ournal of {C}ombinatorics}, 2008.

\bibitem{Graham94}
R.~L. Graham, D.~E. Knuth, and O.~Patashnik.
\newblock {\em Concrete Mathematics: A Foundation for Computer Science}.
\newblock Addison-Wesley Longman Publishing Co., Inc., Boston, MA, USA, 1994.

\bibitem{Gritzmann93}
P.~Gritzmann and B.~Sturmfels.
\newblock Minkowski addition of polytopes: computational complexity and
  applications to {G}r\"obner bases.
\newblock {\em SIAM Journal on Discrete Mathematics}, 6(2):246--269, 1993.

\bibitem{Grunbaum67}
B.~Gr{\"u}nbaum.
\newblock {\em Convex polytopes}.
\newblock With the cooperation of Victor Klee, M. A. Perles and G. C. Shephard.
  Pure and Applied Mathematics, Vol. 16. Interscience Publishers John Wiley \&
  Sons, Inc., New York, 1967.

\bibitem{Halperin04}
D.~Halperin, L.~E. Kavraki, and J.-C. Latombe.
\newblock Robotics.
\newblock In J.~E. Goodman and J.~O'Rourke, editors, {\em Handbook of Discrete
  and Computational Geometry}, chapter~48, pages 1065--1093. CRC Press LLC,
  Boca Raton, FL, 2004.

\bibitem{Lozano79}
T.~Lozano-P\'erez and M.~A. Wesley.
\newblock An algorithm for planning collision-free paths among polyhedral
  obstacles.
\newblock {\em Communications of the ACM}, 22(10):560--570, 1979.

\bibitem{Pachter05}
L.~Pachter and B.~Sturmfels, editors.
\newblock {\em Algebraic statistics for computational biology}.
\newblock Cambridge University Press, New York, 2005.

\bibitem{Sanyal07}
R.~Sanyal.
\newblock Topological obstructions for vertex numbers of {M}inkowski sums.
\newblock preprint, 2007.

\bibitem{Weibel07}
C.~Weibel.
\newblock {\em Minkowski sums of polytopes: Combinatorics and Computation}.
\newblock PhD thesis, EPFL, Lausanne, 2007.

\bibitem{Zhang09}
H.~Zhang.
\newblock {Partially Observable Markov Decision Processes: A Geometric
  Technique and Analysis}.
\newblock {\em Operations Research}, doi:10.1287/opre.1090.0697, 2009.

\bibitem{Ziegler95}
G.~M. Ziegler.
\newblock {\em Lectures on polytopes}, volume 152 of {\em Graduate Texts in
  Mathematics}.
\newblock Springer-Verlag, New York, 1995.

\end{thebibliography}
\end{document}